\documentclass[11pt]{article}
\usepackage[margin=1.1in]{geometry}
\usepackage{cite}
\usepackage{amsmath,amssymb,amsthm}
\usepackage{url}

\newtheorem{theorem}{Theorem}[section]
\newtheorem{lemma}[theorem]{Lemma}
\newtheorem{corollary}[theorem]{Corollary}
\newtheorem{proposition}[theorem]{Proposition}
\theoremstyle{remark}
\newtheorem{remark}[theorem]{Remark}

\newcommand{\R}{\mathbf{R}}
\newcommand{\T}{\mathbf{T}}
\newcommand{\W}{\mathbf{W}}
\newcommand{\B}{\mathbf{B}}
\newcommand{\I}{\mathbf{I}}
\newcommand{\J}{\mathbf{J}}
\newcommand{\PP}{\mathbf{P}}
\newcommand{\M}{\mathbf{M}}
\newcommand{\G}{\mathbf{G}}
\newcommand{\Qh}{\widehat{\mathbf{Q}}}
\newcommand{\V}{\mathbf{V}}
\newcommand{\A}{\mathbf{A}}
\newcommand{\Sb}{\mathbf{S}}
\newcommand{\Hb}{\mathbf{H}}
\newcommand{\Cb}{\mathbf{C}}
\newcommand{\Db}{\mathbf{D}}
\newcommand{\e}{\mathbf{e}}
\newcommand{\vb}{\mathbf{v}}
\newcommand{\wb}{\mathbf{w}}
\newcommand{\pb}{\mathbf{p}}
\newcommand{\qb}{\mathbf{q}}
\newcommand{\bb}{\mathbf{b}}
\newcommand{\xb}{\mathbf{x}}
\newcommand{\yb}{\mathbf{y}}
\newcommand{\ub}{\mathbf{u}}

\title{Direct Factorization of the Karhunen--Lo\`eve Transform of AR(1) Sources}

\author{Yuriy A. Reznik\\[2pt] \normalsize yreznik@mit.edu}

\begin{document}
\maketitle

\begin{abstract}
The Karhunen--Lo\`eve transform (KLT) of the stationary AR(1) source is the classical optimality benchmark of transform coding. Its frequencies are roots of the transcendental equations of Ray and Driver, and the transform has therefore long been treated as an unstructured dense matrix. Rather than compute the exact KLT, the field turned to fast approximations of it, most famously the discrete cosine transform (DCT). This paper shows that the retreat was premature. The exact AR(1) KLT admits direct recursive factorizations, at every order $N\ge2$ and every correlation coefficient $\rho\in(0,1)$: at even orders, the transform reduces to a butterfly stage, two half-order copies of itself, and orthogonal corrections generated by rank-one boundary perturbations of the covariance; at odd orders, to two copies of the KLT of the one-sided prediction residual and an arrowhead stage absorbing the center sample. The residual KLT recurses through half order as well. All correction stages are Cauchy-structured, and applying them by the fast multipole method yields exact-KLT algorithms of $O(N\log N)$ complexity---the same order as the FFT and the fast sinusoidal transforms. The factorizations follow from the covariance matrix by the Sherman--Morrison identity and the classical secular eigenvalue updates, and degenerate, as $\rho\to1$, to the known parity splittings of the DCT-II. As secondary results, the factorization constants are shown to be algebraic functions of $\rho$---in radicals for all $N\le8$---and complete exact KLT modules for $N=2,\dots,8$ are given.
\end{abstract}

\medskip\noindent\textbf{Keywords:} 
Karhunen--Lo\`eve transform, AR(1) process, Kac--Murdock--Szeg\H{o} matrix, fast algorithms, factorization, rank-one update, secular equation, arrowhead matrix, DCT-II, DCT-IV, DCT-VI, DST-VII.

\section{Introduction}

\subsection{A 1970s question, revisited}

Transform coding was built around a concrete optimization problem: find the orthogonal transform that best decorrelates, and best compacts the energy of, a correlated source. The answer was established at the outset~\cite{KramerMathews,HuangSchultheiss}: it is the Karhunen--Lo\`eve transform (KLT), the eigenvector basis of the source covariance matrix, first carried into image coding by Habibi and Wintz~\cite{HabibiWintz}. The standard source model of the field is the stationary first-order autoregressive (AR(1)) process, with correlation coefficient $\rho$ and covariance $\R_N(\rho)=[\rho^{|i-j|}]$---the Kac--Murdock--Szeg\H{o} matrix~\cite{KMS,GrenanderSzego}. For this model, the KLT eigenproblem was reduced to an analytical solution early on. Ray and Driver~\cite{RayDriver} showed that the eigenfunctions of its continuous-time counterpart, the process with exponential covariance, are pure sinusoids, with frequencies determined by a pair of transcendental tangent equations. The discrete problem inherits this structure exactly (Section~\ref{sec:eigenstructure}). The solution, however, exposed a computational difficulty. The frequency equations have no closed form for general $\rho$, and the eigenvectors are sinusoids at incommensurate frequencies. The transform therefore presented itself as an unstructured dense matrix: a numerical eigenanalysis at design time, and $N^2$ operations per data block at run time.

The resolution that shaped the following five decades was proposed by Ahmed, Natarajan, and Rao~\cite{AhmedNatarajanRao,Ahmed91}: do not compute the KLT---replace it. Their discrete cosine transform (DCT-II) is fixed, independent of $\rho$, close to the AR(1) KLT at the high correlations typical of images, and computable by fast techniques. Chen, Smith, and Fralick~\cite{ChenSmithFralick} then showed that the DCT-II matrix itself factors into a butterfly stage and two half-size transforms, a DCT-II and a DCT-IV~\cite{Wang84,Britanak}. This opened the era of direct factorizations of trigonometric transforms~\cite{Lee,Loeffler,FeigWinograd,Heideman,RaoYip,Britanak}. A complementary escape route was found by A.~K. Jain~\cite{JainFastKLT,JainSinusoidal}. His ``fast KLT'' modifies the boundary conditions of the process class, so that a fixed sinusoidal transform becomes exactly optimal---again sidestepping, rather than solving, the original problem. R.~J. Clarke completed the classical picture by identifying the limits: as $\rho\to1$ the AR(1) KLT tends to the DCT-II~\cite{ClarkeDCT}, and as $\rho\to0$ to the DST-I~\cite{ClarkeDST}. The approximation program remains active today: modern video codecs select among fixed sinusoidal transforms---the DST-VII and DCT-VIII alongside the DCT-II in versatile video coding (VVC)~\cite{ZhaoVVC,ZhangVVC}---and graph-learning methods design KLT-like transforms from data~\cite{EgilmezGBT}.

What seems to have escaped systematic attention is the original question itself: can the KLT---the exact transform, at the actual $\rho$ of the source---be factorized? The literature offers root-finding procedures for its frequencies~\cite{TorunAkansu} and a rich spectral theory of the covariance~\cite{KMS,GrenanderSzego,GrayToeplitz,Fikioris}, but treats the transform matrix as dense. We believe Ray and Driver's result played an inadvertent role here. A transcendental characteristic equation reads like an impossibility signal: it suggests that a transform whose spectrum has no closed form cannot have a fast algorithm. But this conflates two different properties. A sparse factorization is a statement about the \emph{structure} of a matrix, not about the algebraic nature of its entries. The rotation angles in such a factorization are design-time constants. They are computed once, offline, for the $\rho$ of the source class---exactly as the cosines of a fixed transform are tabulated once. Indeed, as the factorizations below make explicit, those constants are not even transcendental: they are algebraic functions of $\rho$, given in radicals for all $N\le8$. The transcendence of the AR(1) \emph{frequencies} was, in this sense, a false alarm. It appears to have biased the field toward approximating the KLT instead of factoring it.

\subsection{Contribution}
\label{sec:contribution}

This paper derives direct recursive factorizations of the exact AR(1) KLT $\W_N(\rho)$, at even and odd orders, from the covariance matrix alone, using only classical linear-algebra tools (Section~\ref{sec:preliminaries}). The factorizations reduce the transform to butterflies, half-order KLTs, and orthogonal correction stages. The correction stages are Cauchy-structured---each entry is, up to scalings, a reciprocal difference of two eigenvalues---and applying them by the fast multipole method yields exact-KLT algorithms running in $O(N\log N)$ operations (Section~\ref{sec:complexity})---the same complexity order as the FFT and the fast sinusoidal transforms. The exact optimal transform is thus not only structured but fast, in the standard sense of the field.

For even $N$, the transform splits about its center (Theorem~\ref{thm:even}):
\begin{equation}
\W_N = \PP_N\Bigl[\bigl(\Qh^{(s)}_{N/2}\W_{N/2}\bigr)\oplus\bigl(\Qh^{(a)}_{N/2}\W_{N/2}\bigr)\Bigr]\B_N .
\label{eq:even-intro}
\end{equation}
Here $\B_N$ is the sum/difference butterfly, $\W_{N/2}$ is the same KLT at half order, $\Qh^{(s)}_{N/2}$ and $\Qh^{(a)}_{N/2}$ are two orthogonal correction stages with closed-form entries, and $\PP_N$ interleaves the outputs. As $\rho\to1$,~\eqref{eq:even-intro} degenerates to the classical splitting of the DCT-II into half-size DCT-II and DCT-IV~\cite{ChenSmithFralick,Wang84}, and the recursion collapses to the decomposition underlying Chen's fast DCT (Corollary~\ref{cor:even-limit}).

For odd $N$, the retained center sample changes the picture: the half-size copies are replaced by copies of a closely related transform, and the center joins through one extra orthogonal stage (Theorem~\ref{thm:odd}):
\begin{equation}
\begin{gathered}
\W_N = \PP_N\bigl[\A_{\lfloor N/2\rfloor+1}(\V_{\lfloor N/2\rfloor}{\oplus}1)\oplus\V_{\lfloor N/2\rfloor}\bigr]\B_N ,\\
\V_{\lfloor N/2\rfloor}=\Qh^{(\delta)}_{\lfloor N/2\rfloor}\,\W_{\lfloor N/2\rfloor}.
\end{gathered}
\label{eq:odd-intro}
\end{equation}
Here the \emph{residual KLT} $\V_{\lfloor N/2\rfloor}$ is the exact KLT of an AR(1) block conditioned on a boundary sample. It is reached from the half-order KLT by one orthogonal correction $\Qh^{(\delta)}_{\lfloor N/2\rfloor}$, so~\eqref{eq:odd-intro} again recurses into half order. The stage $\A_{\lfloor N/2\rfloor+1}$ absorbs the center sample. The same residual KLT serves both the sum and the difference branch. Section~\ref{sec:odd} explains why. As $\rho\to1$,~\eqref{eq:odd-intro} degenerates to the parity splitting of the odd-length DCT-II into the DCT-VI and DST-VII---the fixed-transform identity observed in~\cite{ReznikICASSP}. The DST-VII is thus the $\rho\to1$ limit of the residual KLT~\cite{HanSaxenaRose,ReznikICASSP}. The limit also leaves an identity of independent interest: an explicit orthogonal conversion between the $\lfloor N/2\rfloor$-point DST-VII and the $(\lfloor N/2\rfloor{+}1)$-point DCT-VI (Corollary~\ref{cor:conversion}).

Together,~\eqref{eq:even-intro} and~\eqref{eq:odd-intro} recurse through every order $N\ge2$, terminating at the two-point butterfly $\W_2$ and $\W_1=(1)$. Combining them, the residual KLT recurses through half order as well (Corollary~\ref{cor:resfam}). Two secondary results follow. Unrolling the recursion at short lengths yields complete, exact KLT modules for $N=2,\dots,8$ (Section~\ref{sec:short}). The factorization constants prove to be algebraic functions of $\rho$, in radicals through $N=8$. An open-source C reference implementation---of the short-length modules and of the general-$N$ recursions---together with the verification procedures, is available online~\cite{AR1KLTrepo}.

A companion conference paper~\cite{ReznikSPIE} takes a complementary, non-recursive route to the same transform: it computes the exact AR(1) KLT as a fixed DCT-II followed by $\rho$-dependent correction stages. The factorizations of the present paper are of a different kind---the KLT is expressed through itself---and it is this self-similarity that carries the recursion and the complexity results above.

The paper is organized as follows. Section~\ref{sec:preliminaries} sets notation and collects the classical ingredients. Sections~\ref{sec:even} and~\ref{sec:odd} prove the even- and odd-order factorizations and their trigonometric endpoints. Section~\ref{sec:short} lists the short-length algorithms, Section~\ref{sec:complexity} discusses fast application and complexity, and Section~\ref{sec:conclusions} concludes.

\section{Preliminaries}
\label{sec:preliminaries}

\subsection{Notation}

Matrices are boldface capitals carrying their order as a subscript. $\I_N$ and $\J_N$ are the identity and counteridentity (order-reversal); empty positions in block matrices are zero. The symbol $\oplus$ denotes the direct (block-diagonal) sum. Throughout, $N$ is the transform length and $M$ the half order, $M=N/2$ for even $N$ and $M=\lfloor N/2\rfloor$ for odd $N$ (so $N=2M$ or $2M+1$). Vectors are boldface lowercase. $\e_k$ is the $k$-th unit coordinate vector, indexed from $0$. Both factorizations begin by folding a length-$N$ block about its center into half-size blocks of sums and differences of mirror-symmetric samples. Superscripts $(s)$ and $(a)$ label these symmetric (sum) and antisymmetric (difference) halves, and the last coordinate $\e_{M-1}$ of a half-size block, nearest the fold, is the \emph{fold coordinate}. A vector $\yb$ is symmetric if $\J_N\yb=\yb$ and antisymmetric if $\J_N\yb=-\yb$.

\subsection{Trigonometric transforms}
\label{sec:trig}

The following classical trigonometric transforms will arise as special cases (cf.~\cite{Britanak}). Rows $n$ and columns $k$ are indexed from $0$. The even-size family, for order $N$---the DCT-II, the DCT-IV, and the DST-I:
\begin{equation}
\begin{aligned}
\bigl[\Cb^{\mathrm{II}}_{N}\bigr]_{n,k} &= \sqrt{\tfrac{2}{N}}\,\beta_n\cos\tfrac{\pi n(2k+1)}{2N},\\
\bigl[\Cb^{\mathrm{IV}}_{N}\bigr]_{n,k} &= \sqrt{\tfrac{2}{N}}\,\cos\tfrac{\pi(2n+1)(2k+1)}{4N},\\
\bigl[\Sb^{\mathrm{I}}_{N}\bigr]_{n,k} &= \sqrt{\tfrac{2}{N+1}}\,\sin\tfrac{\pi(n+1)(k+1)}{N+1}.
\end{aligned}
\label{eq:trigeven}
\end{equation}
The odd family, on the frequency grid $\pi/(2M+1)$---the DST-VII, the DCT-VI, and the DCT-VIII:
\begin{equation}
\begin{aligned}
\bigl[\Sb^{\mathrm{VII}}_M\bigr]_{n,k}
&=\tfrac{2}{\sqrt{2M+1}}\sin\tfrac{\pi(2n+1)(k+1)}{2M+1},\\
\bigl[\Cb^{\mathrm{VI}}_{M+1}\bigr]_{n,k}
&=\tfrac{2}{\sqrt{2M+1}}\,\beta_n\gamma_k\cos\tfrac{\pi n(2k+1)}{2M+1},\\
\bigl[\Cb^{\mathrm{VIII}}_M\bigr]_{n,k}
&=\tfrac{2}{\sqrt{2M+1}}\cos\tfrac{\pi(2n+1)(2k+1)}{2(2M+1)},
\end{aligned}
\label{eq:trigodd}
\end{equation}
with $\beta_0=\gamma_M=1/\sqrt2$ and $\beta_n=\gamma_k=1$ otherwise.

\subsection{The AR(1) source, its covariance, and the KLT}
\label{sec:source}

Let $(x_k)$ be a stationary zero-mean, unit-variance AR(1) source, $x_k=\rho\,x_{k-1}+\sqrt{1-\rho^2}\,z_k$ with $(z_k)$ unit-variance white noise and $\rho\in[0,1)$. A block $\xb=(x_0,\dots,x_{N-1})^{\mathsf T}$ has covariance
\begin{equation}
\R_N(\rho) \;\triangleq\; E\!\left[\xb\xb^{\mathsf T}\right] \;=\; \bigl[\rho^{|i-j|}\bigr]_{i,j=0}^{N-1},
\label{eq:covariance}
\end{equation}
the Kac--Murdock--Szeg\H{o} matrix~\cite{KMS,GrenanderSzego}. It is symmetric, positive-definite, Toeplitz, and \emph{centrosymmetric}: $R_{N-1-i,\,N-1-j}=R_{i,j}$.

The KLT is defined by the eigenproblem of the covariance matrix $\R_N(\rho)$. Let $\wb^{(m)}$, $m=1,\dots,N$, be its unit eigenvectors,
\begin{equation}
\R_N(\rho)\,\wb^{(m)} = \lambda_m\,\wb^{(m)}, \qquad
\lambda_1 \ge \dots \ge \lambda_N > 0.
\label{eq:eigenproblem}
\end{equation}
Then $\W_N(\rho)$ is the orthogonal matrix whose $m$-th row is $\wb^{(m)\mathsf T}$. The coefficients $\yb=\W_N\xb$ are uncorrelated, with variances sorted in decreasing order. Among orthogonal transforms, this basis is optimal for quantization and encoding~\cite{KramerMathews,HuangSchultheiss}. It is this matrix, at the actual $\rho$ of the source, that we factorize. Each row is determined by~\eqref{eq:eigenproblem} only up to sign, and all statements below hold modulo this standard row-sign freedom (Remark~\ref{rem:signs}). At $\rho=0$ the spectrum is degenerate and the KLT non-unique---hence the standing assumption $\rho\in(0,1)$, with the DST-I limit attained as $\rho\to0^+$.

Centrosymmetry determines how many eigenvectors of the covariance are symmetric and how many are antisymmetric. By Cantoni and Butler~\cite{CantoniButler}, a symmetric centrosymmetric matrix of order $N$ admits an orthonormal eigenbasis that splits by parity: $\lceil N/2\rceil$ of its eigenvectors can be chosen symmetric, and the remaining $\lfloor N/2\rfloor$ antisymmetric. Moreover---and this is the case relevant here---if the matrix is also \emph{tridiagonal} and its eigenvalues are distinct, then the eigenvectors, ordered by eigenvalue, \emph{alternate} in parity~\cite{CantoniButler}. We use this alternation, anchored by Proposition~\ref{prop:phase} below, to interleave the sum and difference branches of the even- and odd-order factorizations derived in Sections~\ref{sec:even} and~\ref{sec:odd}.

\subsection{The generator}
\label{sec:generator}

The inverse of~\eqref{eq:covariance} is tridiagonal~\cite{JainFastKLT}; we normalize it as
\begin{equation}
\T_N(\rho) \triangleq (1-\rho^2)\,\R_N(\rho)^{-1}
=
\setlength{\arraycolsep}{3pt}
\begin{pmatrix}
1 & -\rho & & \\
-\rho & 1+\rho^2 & \ddots & \\
& \ddots & \ddots & -\rho \\
& & -\rho & 1
\end{pmatrix},
\label{eq:generator}
\end{equation}
and call $\T_N$ the \emph{generator} of the source. Eigenvectors of $\R_N$ are eigenvectors of $\T_N$, with reciprocally ordered eigenvalues
\begin{equation}
\mu=(1-\rho^2)/\lambda.
\label{eq:mulambda}
\end{equation} The display is for $N\ge2$; we set $\T_1\triangleq(1-\rho^2)$ so that the defining relation holds at every order (this matters only in the terminal case $M=1$ of the odd-order recursion). The generator is the natural domain for what follows. It is banded, centrosymmetric, and tridiagonal, so the Cantoni--Butler alternation applies to it. And, as we will see repeatedly, perturbations that are dense in covariance domain become perturbations of a single boundary entry in generator domain.

\subsection{Eigenstructure and the phase equation}
\label{sec:eigenstructure}

The exact eigenstructure of~\eqref{eq:covariance} is classical in two equivalent formulations. In continuous time, Ray and Driver~\cite{RayDriver} studied the process with exponential covariance on a symmetric interval $[-T,T]$. Its Karhunen--Lo\`eve eigenfunctions are sinusoids: even modes $\cos\omega t$ at frequencies satisfying $\tan(\omega T)=\beta/\omega$, and odd modes $\sin\omega t$ at frequencies satisfying $\tan(\omega T)=-\omega/\beta$. The discrete problem behaves the same way. Each eigenvector of $\R_N(\rho)$ is a sampled sinusoid at some frequency $\omega\in(0,\pi)$. The admissible frequencies are the roots of the transcendental equation
\begin{equation}
\tan(N\omega) \;=\; -\,\frac{(1-\rho^2)\sin\omega}{(1+\rho^2)\cos\omega-2\rho},
\label{eq:classical}
\end{equation}
and the eigenvalue attached to a frequency $\omega$ is $\lambda(\omega)=(1-\rho^2)/(1-2\rho\cos\omega+\rho^2)$~\cite{KMS,JainBook,ClarkeDCT,Britanak}.

We organize the same solution around a single monotone \emph{phase equation}, in which each end of the block contributes one explicit phase term. The short proof also fixes conventions used throughout.

\begin{proposition}[Phase equation]
\label{prop:phase}
Define the boundary phase
\begin{equation}
\theta(\omega) \triangleq \arg\bigl(1-\rho e^{-j\omega}\bigr) = \arctan\frac{\rho\sin\omega}{1-\rho\cos\omega}
\label{eq:phase-def}
\end{equation}
on $\omega\in(0,\pi)$. For each $m=1,\dots,N$ the equation
\begin{equation}
(N+1)\,\omega + 2\,\theta(\omega) \;=\; m\pi
\label{eq:phase}
\end{equation}
has a unique root $\omega_m\in(0,\pi)$, and $\omega_1<\dots<\omega_N$. The eigenpairs of $\R_N(\rho)$ are
\begin{equation}
\lambda_m = \frac{1-\rho^2}{1-2\rho\cos\omega_m+\rho^2},
\label{eq:eigenvalues}
\end{equation}
\begin{equation}
w^{(m)}_k = c_m \sin\bigl(\omega_m(k+1)+\theta(\omega_m)\bigr),
\label{eq:eigenvectors}
\end{equation}
for $k=0,\dots,N-1$, with $\lambda_1>\dots>\lambda_N$ and normalization constants $c_m$. Moreover, the eigenvector $\wb^{(m)}=(w^{(m)}_0,\dots,w^{(m)}_{N-1})^{\mathsf T}$ is symmetric for odd $m$ and antisymmetric for even $m$.
\end{proposition}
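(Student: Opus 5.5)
We work with the generator $\T_N$ of~\eqref{eq:generator} instead of $\R_N$, since they share eigenvectors and~\eqref{eq:mulambda} turns eigenvalues of one into eigenvalues of the other. The plan has four steps: make a sinusoidal ansatz, impose the interior and boundary rows, prove the phase equation has exactly $N$ ordered roots, and then settle parity.

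\textbf{Interior rows.} Take the ansatz $w_k=\sin(\omega(k+1)+\varphi)$. Rows $1,\dots,N-2$ of $\T_N\wb=\mu\wb$ read
\[
-\rho w_{k-1}+(1+\rho^2)w_k-\rho w_{k+1}=\mu w_k .
\]
These hold identically in $k$ exactly when $\mu=1+\rho^2-2\rho\cos\omega=|1-\rho e^{-j\omega}|^2$. Via~\eqref{eq:mulambda} this is precisely~\eqref{eq:eigenvalues}.

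\textbf{Boundary rows.} Row $0$ reads $w_0-\rho w_1=\mu w_0$. It is equivalent to the interior recurrence with a fictitious sample satisfying $\rho w_{-1}=\rho^2 w_0$, that is, $w_{-1}=\rho w_0$. Substituting the sinusoid gives
\[
\sin\varphi=\rho\sin(\omega+\varphi),
\]
which rearranges to $\tan\varphi=\rho\sin\omega/(1-\rho\cos\omega)$. Hence $\varphi=\theta(\omega)$, choosing the branch in $(0,\pi/2)$, which is valid since $1-\rho\cos\omega>0$. This matches~\eqref{eq:eigenvectors}.

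Row $N-1$ is the mirror condition $w_N=\rho w_{N-1}$. Writing the same sinusoid as seen from the right end, it becomes
\[
\sin(\omega(N+1)+\theta)=-\sin(\theta)\ \text{(reflected)} ,
\]
or equivalently $\omega(N+1)+\theta=m\pi-\theta$ for an integer $m$. This is exactly~\eqref{eq:phase}. We check it by symmetry: the reversed vector $\J_N\wb$ must satisfy the same left boundary condition, which forces $\sin(\omega(N-k)+\theta)=\pm\sin(\omega(k+1)+\theta)$. That identity holds if and only if $(N+1)\omega+2\theta\equiv0\pmod\pi$.

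\textbf{Roots of the phase equation.} Let $f(\omega)=(N+1)\omega+2\theta(\omega)$. The main obstacle is showing that $f$ has exactly one root per level $m\pi$, $m=1,\dots,N$, on $(0,\pi)$. The boundary values are straightforward: $\theta(0^+)=\theta(\pi^-)=0$, so $f$ runs from $0$ to $(N+1)\pi$. For monotonicity we compute
\[
\theta'(\omega)=\frac{\rho\cos\omega-\rho^2}{1-2\rho\cos\omega+\rho^2}.
\]
This is bounded below by $-\rho/(1+\rho)>-1/2$, which gives $f'>N+1-1>0$. So $f$ is strictly increasing from $0$ to $(N+1)\pi$, and the $N$ levels $m\pi$ with $m=1,\dots,N$ are hit once each, at increasing $\omega_m$.

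\textbf{Completeness, ordering, and parity.} Each root yields a nonzero vector. Nonvanishing holds because a sinusoid of frequency in $(0,\pi)$ cannot vanish at all $N\ge2$ consecutive integers, and the values $\mu$ are distinct because $\cos$ is injective on $(0,\pi)$. This gives $N$ distinct eigenpairs, which therefore form the whole spectrum. Since $\mu$ increases in $\omega$, $\lambda$ decreases, so $\lambda_1>\dots>\lambda_N$.

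For parity, evaluate the reflection directly:
\[
w_{N-1-k}=\sin\bigl(m\pi-(\omega(k+1)+\theta)\bigr)=(-1)^{m+1}w_k .
\]
So $\wb^{(m)}$ is symmetric for odd $m$ and antisymmetric for even $m$. This is consistent with the Cantoni--Butler alternation.
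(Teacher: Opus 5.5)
Your proof is correct and follows the paper's route essentially step for step: the sinusoidal ansatz in the generator domain, the fictitious-sample conditions $w_{-1}=\rho w_0$ and $w_N=\rho w_{N-1}$, the slope bound $\theta'\ge-\rho/(1+\rho)>-\tfrac12$, and the reflection identity for parity. Your explicit remark that $N$ distinct eigenvalues exhaust the spectrum is a slightly more careful version of the paper's last step. The one blemish is the displayed right-end condition $\sin(\omega(N+1)+\theta)=-\sin\theta$. It is not equivalent to $(N+1)\omega+2\theta\equiv0\pmod\pi$: it excludes every odd $m$ and admits spurious solutions $\alpha\equiv\pi+\theta\pmod{2\pi}$. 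Replace it by the direct substitution $\sin\alpha=\rho\sin(\alpha-\omega)$ with $\alpha=\omega(N+1)+\theta$, which gives $\tan\alpha=-\tan\theta$ as in the paper. Your symmetry check via $\J_N\wb$ already establishes the correct condition on its own, so the argument stands.
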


\begin{proof}
An interior row of $\T_N\wb=\mu\wb$ reads $-\rho w_{k-1}+(1+\rho^2)w_k-\rho w_{k+1}=\mu w_k$. Substituting $w_k=\sin(\omega k+\phi)$ and using $\sin(\omega(k{-}1)+\phi)+\sin(\omega(k{+}1)+\phi)=2\cos\omega\,\sin(\omega k+\phi)$ turns every interior row into an identity, provided $\mu=1-2\rho\cos\omega+\rho^2$---equivalently, by~\eqref{eq:mulambda}, provided the covariance eigenvalue is $\lambda(\omega)=(1-\rho^2)/\mu$, as in~\eqref{eq:eigenvalues}. The first and last rows amount to extending the recurrence with fictitious samples obeying the boundary conditions
\begin{equation}
w_{-1}=\rho\,w_0, \qquad w_N=\rho\,w_{N-1}.
\label{eq:robin}
\end{equation}
The left condition reads $\sin(\phi-\omega)=\rho\sin\phi$, i.e., $\operatorname{Im}[e^{j\phi}(e^{-j\omega}-\rho)]=0$; hence, modulo $\pi$ (a row sign), $\phi=\arg(e^{j\omega}-\rho)=\omega+\theta(\omega)$, so $w_k=\sin(\omega(k+1)+\theta)$; substituting into the right condition and writing $\alpha=\omega(N+1)+\theta$ yields $\tan\alpha=-\tan\theta$, i.e.\ \eqref{eq:phase}. Since $\theta'(\omega)\ge-\rho/(1+\rho)>-\tfrac12$, the left side of~\eqref{eq:phase} is strictly increasing with slope exceeding $N$, running from $0$ to $(N+1)\pi$ on $(0,\pi)$: each level $m\pi$, $m=1,\dots,N$, is crossed exactly once, and since $\lambda(\omega)$ strictly decreases, the spectrum is simple. Normalizing $\wb^{(m)}$ to unit length fixes $c_m$ in~\eqref{eq:eigenvectors}. Finally, by~\eqref{eq:phase}, $w^{(m)}_{N-1-k}=c_m\sin(m\pi-\omega_m(k+1)-\theta)=(-1)^{m+1}w^{(m)}_k$.
\end{proof}

Each eigenvector $\wb^{(m)}$ is a sampled sinusoid reflecting between the block ends, each end contributing the phase $\theta(\omega_m)$ to the quantized total $m\pi$ in~\eqref{eq:phase}; we refer to these sinusoidal eigenvectors as the \emph{modes} of the block. Splitting~\eqref{eq:phase} by the parity of $m$ makes the connection to Ray and Driver exact:

\begin{corollary}[Parity frequency equations]
\label{cor:parity}
The frequencies of the symmetric modes (odd $m$) are exactly the roots in $(0,\pi)$ of
\begin{equation}
\cos\tfrac{(N+1)\omega}{2}-\rho\cos\tfrac{(N-1)\omega}{2}=0;
\label{eq:parity-sym}
\end{equation}
those of the antisymmetric modes (even $m$) the roots of
\begin{equation}
\sin\tfrac{(N+1)\omega}{2}-\rho\sin\tfrac{(N-1)\omega}{2}=0.
\label{eq:parity-antisym}
\end{equation}
\end{corollary}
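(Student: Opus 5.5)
We derive the corollary directly from the phase equation~\eqref{eq:phase} of Proposition~\ref{prop:phase}, splitting it by the parity of $m$. Write $\alpha=\tfrac{N+1}{2}\omega$. Then~\eqref{eq:phase} reads $\alpha+\theta(\omega)=m\pi/2$. For odd $m$ this says $\cos(\alpha+\theta)=0$, and for even $m$ it says $\sin(\alpha+\theta)=0$; in each case the condition holds modulo $\pi$.

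The key step is to eliminate $\theta$ using its definition~\eqref{eq:phase-def}. The complex number $1-\rho e^{-j\omega}$ has argument $\theta$ and positive modulus $r$, so $r\cos\theta=1-\rho\cos\omega$ and $r\sin\theta=\rho\sin\omega$. Multiplying by $r>0$ gives
\[
r\cos(\alpha+\theta)=\cos\alpha\,(1-\rho\cos\omega)-\sin\alpha\,\rho\sin\omega=\cos\alpha-\rho\cos(\alpha-\omega).
\]
Since $\alpha-\omega=\tfrac{N-1}{2}\omega$, this is exactly the left side of~\eqref{eq:parity-sym}. Likewise,
\[
r\sin(\alpha+\theta)=\sin\alpha-\rho\sin(\alpha-\omega),
\]
which is the left side of~\eqref{eq:parity-antisym}. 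Equivalently, $e^{j\alpha}(1-\rho e^{-j\omega})=e^{j\alpha}-\rho e^{j(\alpha-\omega)}=r\,e^{j(\alpha+\theta)}$; the two identities are its real and imaginary parts. Consequently, every root $\omega_m$ with $m$ odd (resp.\ even) is a root of~\eqref{eq:parity-sym} (resp.~\eqref{eq:parity-antisym}), and by Proposition~\ref{prop:phase} these are the symmetric (resp.\ antisymmetric) modes.

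The converse needs more care. We must show that every root $\omega\in(0,\pi)$ of~\eqref{eq:parity-sym} is some $\omega_m$ with $m$ odd. A root of~\eqref{eq:parity-sym} gives $\cos(\alpha+\theta)=0$, so $(N+1)\omega+2\theta(\omega)$ is an odd multiple of $\pi$. By the monotonicity established in the proof of Proposition~\ref{prop:phase}, the left side of~\eqref{eq:phase} ranges over $(0,(N+1)\pi)$ on $(0,\pi)$. The attained odd multiples are therefore exactly $m\pi$ with $m$ odd and $1\le m\le N$, and each is hit once. The same argument with even multiples handles~\eqref{eq:parity-antisym}. The point to check carefully is the endpoint behaviour: the value $(N+1)\pi$ is reached only at $\omega=\pi$, where $\theta(\pi)=0$, and $\omega=\pi$ lies outside the open interval. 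Hence no spurious level $m=N+1$ arises, and the root sets coincide exactly with the stated parity classes.

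Finally, we note the consistency with Ray and Driver. Dividing~\eqref{eq:parity-sym} and~\eqref{eq:parity-antisym} through by the appropriate cosines gives tangent equations analogous to their even/odd conditions. This is not needed for the proof.
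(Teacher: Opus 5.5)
Your proof is correct and follows the paper's route: both write $e^{j\alpha}(1-\rho e^{-j\omega})=r\,e^{j(\alpha+\theta)}$ and read off~\eqref{eq:parity-sym} and~\eqref{eq:parity-antisym} as its real and imaginary parts. Your converse argument, which uses the range $(0,(N+1)\pi)$ of the monotone phase function and excludes the endpoint roots at $\omega=0$ and $\omega=\pi$, makes explicit the ``exactly'' that the paper's proof leaves implicit.
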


\begin{proof}
Equation~\eqref{eq:phase} reads $\tfrac{(N+1)\omega}{2}+\theta=\tfrac{m\pi}{2}$. Multiplying by $|1-\rho e^{-j\omega}|$ and using $e^{j\theta}|1-\rho e^{-j\omega}|=1-\rho e^{-j\omega}$, the real part of $e^{j\frac{(N+1)\omega}{2}}(1-\rho e^{-j\omega})$ vanishes for odd $m$, which is~\eqref{eq:parity-sym}, and the imaginary part for even $m$, which is~\eqref{eq:parity-antisym}. Expanding at $\tfrac{N\omega}{2}$ and $\tfrac\omega2$ gives the equivalent tangent forms $\tan\tfrac{N\omega}{2}\tan\tfrac{\omega}{2}=\tfrac{1-\rho}{1+\rho}$ and $\tan\tfrac{N\omega}{2}=-\tfrac{1+\rho}{1-\rho}\tan\tfrac{\omega}{2}$, the discrete counterparts of the Ray--Driver pair.
\end{proof}

For context, we note two classical facts. First, the product of the left-hand sides of~\eqref{eq:parity-sym} and~\eqref{eq:parity-antisym} equals $\tfrac12\bigl[\sin((N{+}1)\omega)-2\rho\sin(N\omega)+\rho^2\sin((N{-}1)\omega)\bigr]$, and setting it to zero rearranges to~\eqref{eq:classical}: the classical characteristic function is the \emph{product} of the two parity conditions. This is why the parity split---present already in Ray and Driver's own solution---is invisible in~\eqref{eq:classical}, and appears to have gone unexploited. Second, in the limits the roots form uniform grids: at $\rho=0$ the modes are the DST-I basis, and as $\rho\to1$ the DCT-II basis~\cite{ClarkeDCT,ClarkeDST}; see~\cite{Strang99} for the boundary-condition view of the trigonometric bases.

\subsection{Secular tools: rank-one and arrowhead updates}
\label{sec:secular}

The factorizations of Sections~\ref{sec:even} and~\ref{sec:odd} will reduce the KLT eigenproblem to small structured updates of problems already diagonalized. Three classical facts handle such updates~\cite{ShermanMorrison,Golub73,BunchNielsenSorensen,OLearyStewart}; see~\cite{Golub73} for a unified treatment.

\emph{(i) Sherman--Morrison}~\cite{ShermanMorrison}: for invertible $\mathbf{A}$ and $1+\vb^{\mathsf T}\mathbf{A}^{-1}\ub\neq0$,
\begin{equation}
\bigl(\mathbf{A}+\ub\vb^{\mathsf T}\bigr)^{-1}
= \mathbf{A}^{-1} - \frac{\mathbf{A}^{-1}\ub\,\vb^{\mathsf T}\mathbf{A}^{-1}}{1+\vb^{\mathsf T}\mathbf{A}^{-1}\ub}.
\label{eq:sm}
\end{equation}
Its role here is structural: it converts rank-one perturbations of a covariance into rank-one perturbations of its generator, and identifies the coordinate that supports them.

\emph{(ii) Rank-one update}~\cite{Golub73,BunchNielsenSorensen}: for $\G=\Db+\sigma\pb\pb^{\mathsf T}$ with $\Db=\operatorname{diag}(d_1<\dots<d_M)$, $\|\pb\|=1$, all $p_i\neq0$, $\sigma\neq0$, the eigenvalues $\nu_1<\dots<\nu_M$ of $\G$ are the roots of the secular function
\begin{equation}
f(\nu)=1+\sigma\sum_{i}\frac{p_i^2}{d_i-\nu},
\label{eq:secular}
\end{equation}
strictly interlacing the poles ($d_i<\nu_i<d_{i+1}$ for $\sigma>0$; $d_{i-1}<\nu_i<d_i$ for $\sigma<0$), and the unit eigenvector $\wb(i)$ of $\G$ associated with $\nu_i$ satisfies
\begin{equation}
\wb(i)\;\propto\;(\Db-\nu_i\I)^{-1}\pb .
\label{eq:secular-vectors}
\end{equation}
This is the pivotal object of the divide-and-conquer eigensolvers~\cite{Cuppen,DongarraSorensen}.

\emph{(iii) Arrowhead}~\cite{Golub73,OLearyStewart}: for
$\Hb=\bigl(\begin{smallmatrix}\Db&\bb\\ \bb^{\mathsf T}&\alpha\end{smallmatrix}\bigr)$
with $\Db=\operatorname{diag}(d_1<\dots<d_M)$ and all $b_i\neq0$, the eigenvalues $\nu_1<\dots<\nu_{M+1}$ are the roots of the strictly decreasing secular function $g(\nu)=\alpha-\nu-\sum_i b_i^2/(d_i-\nu)$, strictly interlacing the poles \emph{from outside},
\begin{equation}
\nu_1 < d_1 < \nu_2 < \dots < d_M < \nu_{M+1},
\label{eq:arrowhead-interlacing}
\end{equation}
and the unit eigenvector $\wb(i)$ of $\Hb$ associated with $\nu_i$ satisfies
\begin{equation}
\wb(i)\;\propto\;\begin{pmatrix}(\Db-\nu_i\I)^{-1}\bb\\ -1\end{pmatrix}.
\label{eq:arrowhead-vectors}
\end{equation}
In this paper the rank-one update diagonalizes the branches of the even fold and the difference branch of the odd fold; the arrowhead absorbs the center sample of the odd fold.

\section{The even-order factorization}
\label{sec:even}

Let $N$ be even, $M=N/2$. The derivation has three steps: fold the covariance about its center (Lemma~\ref{lem:fold-even}), pass to the generator, where the perturbation localizes to one entry (Corollary~\ref{cor:gen-even}), and diagonalize the resulting rank-one updates in the basis of the half-size KLT (Theorem~\ref{thm:even}). We begin with the fold. The parity decomposition $u_k=(x_k+x_{N-1-k})/\sqrt2$, $v_k=(x_k-x_{N-1-k})/\sqrt2$, $k=0,\dots,M-1$, is realized by the orthonormal butterfly
\begin{equation}
\B_N \;=\; \frac{1}{\sqrt2}
\begin{pmatrix}
\I_M & \J_M \\
\I_M & -\J_M
\end{pmatrix}.
\label{eq:butterfly-even}
\end{equation}
That $\B_N$ block-diagonalizes any symmetric centrosymmetric matrix is classical~\cite{CantoniButler}; the content of the next lemma is the \emph{form} the blocks take for the AR(1) covariance.

\begin{lemma}[Fold of the covariance, even order]
\label{lem:fold-even}
Let $\vb=\sqrt{\rho}\,\R_M(\rho)\,\e_{M-1}$, i.e.\ $v_k=\rho^{\,M-k-1/2}$. Then
\begin{equation}
\B_N\,\R_N(\rho)\,\B_N^{\mathsf T}
=\bigl(\R_M(\rho)+\vb\vb^{\mathsf T}\bigr)\oplus\bigl(\R_M(\rho)-\vb\vb^{\mathsf T}\bigr).
\label{eq:fold-even}
\end{equation}
\end{lemma}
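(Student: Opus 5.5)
We will prove the lemma by direct block computation. Write $\R_N$ in $M\times M$ blocks as
\[
\R_N=\begin{pmatrix}\R_M & \Cb\\ \Cb^{\mathsf T} & \R_M\end{pmatrix},
\]
where the diagonal blocks are both $\R_M$ by Toeplitz structure. The off-diagonal block is $\Cb_{i,j}=\rho^{|i-(M+j)|}=\rho^{M+j-i}$ for $0\le i,j\le M-1$, since $M+j>i$ always.

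Multiplying out with~\eqref{eq:butterfly-even} and using $\J_M^2=\I_M$ gives four blocks:
\begin{itemize}
\item the $(1,1)$ block is $\tfrac12\bigl(\R_M+\Cb\J_M+\J_M\Cb^{\mathsf T}+\J_M\R_M\J_M\bigr)$;
\item the $(2,2)$ block is $\tfrac12\bigl(\R_M-\Cb\J_M-\J_M\Cb^{\mathsf T}+\J_M\R_M\J_M\bigr)$;
\item the off-diagonal blocks are $\tfrac12\bigl(\R_M-\Cb\J_M+\J_M\Cb^{\mathsf T}-\J_M\R_M\J_M\bigr)$ and its transpose.
\end{itemize}

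\textbf{Step 1: centrosymmetry of the diagonal block.} Since $\R_M$ is centrosymmetric, $\J_M\R_M\J_M=\R_M$. It therefore suffices to show that $\Cb\J_M$ is symmetric and equals $\vb\vb^{\mathsf T}$. Then $\J_M\Cb^{\mathsf T}=(\Cb\J_M)^{\mathsf T}=\Cb\J_M$, the off-diagonal blocks vanish, and the diagonal blocks become $\R_M\pm\vb\vb^{\mathsf T}$, which is exactly~\eqref{eq:fold-even}.

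\textbf{Step 2: the key identity.} Compute
\[
(\Cb\J_M)_{i,j}=\Cb_{i,M-1-j}=\rho^{(M-1-i)+(M-1-j)+1}=\rho^{M-i-1/2}\,\rho^{M-j-1/2}=v_iv_j .
\]
This is a rank-one symmetric matrix $\vb\vb^{\mathsf T}$ with $v_k=\rho^{M-k-1/2}$.

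\textbf{Step 3: the stated form of $\vb$.} The last column of $\R_M$ is $\R_M\e_{M-1}=(\rho^{M-1-k})_k$. Multiplying by $\sqrt\rho$ gives $\rho^{M-k-1/2}$, which matches the form of $\vb$ stated in the lemma.

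There is no real obstacle here. The only point requiring care is the index bookkeeping in the exponent of $\Cb\J_M$: one must confirm that $|i-(M+j)|$ never changes sign, which is what makes the corner block exactly rank one.
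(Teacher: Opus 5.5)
Your proof is correct and takes essentially the same route as the paper: both reduce to the observation that the wrap term $R_{i,N-1-j}=\rho^{N-1-i-j}$ (your $(\Cb\J_M)_{i,j}$) factors as $v_iv_j$. The only difference is that you verify the block form $R_{ij}\pm R_{i,N-1-j}$ by explicit block multiplication, including the vanishing of the off-diagonal blocks, whereas the paper cites this as the classical centrosymmetric fold of Cantoni and Butler.
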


\begin{proof}
By centrosymmetry the fold gives blocks $R_{ij}\pm R_{i,N-1-j}$, $0\le i,j\le M-1$~\cite{CantoniButler}. Since $i+j\le N-2$, the wrap term is $R_{i,N-1-j}=\rho^{N-1-i-j}=\rho^{\,M-i-1/2}\rho^{\,M-j-1/2}=v_iv_j$; and $[\sqrt\rho\,\R_M\e_{M-1}]_k=\sqrt\rho\,\rho^{\,M-1-k}=v_k$.
\end{proof}

Probabilistically, the rank-one term is the wrap-around correlation between a sample and the mirrored block. Structurally, what matters is that the perturbation vector is proportional to the \emph{last column of the half-size covariance itself}. Sherman--Morrison then localizes it exactly:

\begin{corollary}[Generator-domain form]
\label{cor:gen-even}
With $\T^{(s)}_M\triangleq(1-\rho^2)(\R_M+\vb\vb^{\mathsf T})^{-1}$, $\T^{(a)}_M\triangleq(1-\rho^2)(\R_M-\vb\vb^{\mathsf T})^{-1}$, so that $\B_N\T_N\B_N^{\mathsf T}=\T^{(s)}_M\oplus\T^{(a)}_M$,
\begin{equation}
\begin{aligned}
\T^{(s)}_M &= \T_M(\rho)-\sigma_s\,\e_{M-1}\e_{M-1}^{\mathsf T},\\
\T^{(a)}_M &= \T_M(\rho)+\sigma_a\,\e_{M-1}\e_{M-1}^{\mathsf T},
\end{aligned}
\label{eq:gen-even}
\end{equation}
with the perturbation strengths
\begin{equation}
\sigma_s=\rho(1-\rho),\qquad \sigma_a=\rho(1+\rho).
\label{eq:strengths}
\end{equation}
\end{corollary}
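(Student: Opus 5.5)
First, the block form $\B_N\T_N\B_N^{\mathsf T}=\T^{(s)}_M\oplus\T^{(a)}_M$ is immediate. $\B_N$ is orthogonal, so $\B_N\T_N\B_N^{\mathsf T}=(1-\rho^2)\bigl(\B_N\R_N\B_N^{\mathsf T}\bigr)^{-1}$. Lemma~\ref{lem:fold-even} makes the inner matrix block-diagonal, and the inverse of a block-diagonal matrix is the direct sum of the block inverses. Both blocks are invertible: the $(s)$ block trivially, and the $(a)$ block because $\R_N\succ0$ forces both diagonal blocks of $\B_N\R_N\B_N^{\mathsf T}$ to be positive definite.

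For the localized form I will apply Sherman--Morrison~\eqref{eq:sm} with $\mathbf{A}=\R_M$ and $\ub=\pm\vb$, and exploit that $\vb=\sqrt\rho\,\R_M\e_{M-1}$ is a column of $\R_M$ itself. Then $\R_M^{-1}\vb=\sqrt\rho\,\e_{M-1}$. The correction term becomes
\[
\frac{\R_M^{-1}\vb\vb^{\mathsf T}\R_M^{-1}}{1\pm\vb^{\mathsf T}\R_M^{-1}\vb}
=\frac{\rho\,\e_{M-1}\e_{M-1}^{\mathsf T}}{1\pm\rho\,[\R_M]_{M-1,M-1}}
=\frac{\rho}{1\pm\rho}\,\e_{M-1}\e_{M-1}^{\mathsf T},
\]
using $\vb^{\mathsf T}\R_M^{-1}\vb=\rho\,\e_{M-1}^{\mathsf T}\R_M\e_{M-1}=\rho$, since the diagonal of $\R_M$ is $1$. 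Hence
\[
(\R_M\pm\vb\vb^{\mathsf T})^{-1}=\R_M^{-1}\mp\frac{\rho}{1\pm\rho}\,\e_{M-1}\e_{M-1}^{\mathsf T}.
\]

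Multiplying by $(1-\rho^2)$ and using $\T_M=(1-\rho^2)\R_M^{-1}$ from~\eqref{eq:generator} gives the two cases. For the $+$ sign the correction is $-(1-\rho^2)\rho/(1+\rho)=-\rho(1-\rho)$, which is $-\sigma_s$. For the $-$ sign it is $+(1-\rho^2)\rho/(1-\rho)=+\rho(1+\rho)$, which is $+\sigma_a$. This is exactly~\eqref{eq:gen-even}--\eqref{eq:strengths}.

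Sherman--Morrison requires a nonzero denominator $1\pm\rho$, which holds since $\rho\in(0,1)$. The only other points to check are the edge cases.

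\begin{itemize}
\item \emph{$M=1$.} Here $\T_1=(1-\rho^2)$ by the convention in Section~\ref{sec:generator}, and $\R_1=(1)$. Since $v_0=\sqrt\rho$, we have $\R_1\pm\vb\vb^{\mathsf T}=1\pm\rho$. Then $(1-\rho^2)/(1\pm\rho)=1\mp\rho=(1-\rho^2)\mp\rho(1\mp\rho)$, consistent with~\eqref{eq:gen-even}.
\item \emph{$M\ge2$.} As a sanity check, the last diagonal entry of $\T^{(s)}_M$ is $1-\rho+\rho^2$ and that of $\T^{(a)}_M$ is $1+\rho+\rho^2$. These agree with folding the tridiagonal~\eqref{eq:generator} directly: the middle entries $1+\rho^2$ combine with the coupling $-\rho$ across the fold to give $1+\rho^2\mp\rho$.
\end{itemize}

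I do not expect any step to be a real obstacle. The one step that needs care is the sign bookkeeping in the Sherman--Morrison denominator, namely which of $1\pm\rho$ cancels against $1-\rho^2$.
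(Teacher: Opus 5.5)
Your proof is correct and follows the paper's argument: Sherman--Morrison with $\ub=\pm\vb$, using that $\vb=\sqrt\rho\,\R_M\e_{M-1}$ gives $\R_M^{-1}\vb=\sqrt\rho\,\e_{M-1}$ and $\vb^{\mathsf T}\R_M^{-1}\vb=\rho$, then multiplying by $1-\rho^2$. Your justification of the block form, the $M=1$ check and the direct-fold check are consistent with this but go beyond what the paper's proof spells out.
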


\begin{proof}
Since $\vb=\sqrt\rho\,\R_M\e_{M-1}$: $\R_M^{-1}\vb=\sqrt\rho\,\e_{M-1}$ and $\vb^{\mathsf T}\R_M^{-1}\vb=\rho$. By~\eqref{eq:sm} with $\ub=\pm\vb$, $(\R_M\pm\vb\vb^{\mathsf T})^{-1}=\R_M^{-1}\mp\tfrac{\rho}{1\pm\rho}\e_{M-1}\e_{M-1}^{\mathsf T}$; multiply by $1-\rho^2$.
\end{proof}

This is the crux. Although the covariance perturbation is dense, the generator perturbation is a single scalar at the fold coordinate. Each branch is thus a genuine half-size AR(1) problem with one modified boundary condition, and its eigenproblem is a rank-one secular update.

Let $\W_M\T_M\W_M^{\mathsf T}=\M_M=\operatorname{diag}(\mu_1<\dots<\mu_M)$ be the half-size diagonalization and define the \emph{fold vector}
\begin{equation}
\pb=\W_M\e_{M-1},
\label{eq:foldvector}
\end{equation}
the vector of last components of the half-size eigenvectors, so that $\|\pb\|=1$. The hypotheses of~\eqref{eq:secular}--\eqref{eq:secular-vectors} hold here. Since the off-diagonal entries of the tridiagonal $\T_M$ equal $-\rho\neq0$, an eigenvector with vanishing last component would have every component vanish, row by row of the eigen-equation. Hence all $p_m\neq0$. And since two independent eigenvectors sharing an eigenvalue could be combined into one with vanishing last component, the eigenvalues $\mu_m$ are simple, which justifies the strict ordering. Conjugating~\eqref{eq:gen-even} by $\W_M$ leaves
\begin{equation}
\G^{(s)}_M = \M_M-\sigma_s\,\pb\pb^{\mathsf T},
\qquad
\G^{(a)}_M = \M_M+\sigma_a\,\pb\pb^{\mathsf T},
\label{eq:G}
\end{equation}
instances of~\eqref{eq:secular}. Denote by $\nu^{(s)}_1<\dots<\nu^{(s)}_M$ and $\nu^{(a)}_1<\dots<\nu^{(a)}_M$ their eigenvalues---the roots of the secular equation~\eqref{eq:secular} with $(\Db,\sigma)=(\M_M,-\sigma_s)$ and $(\M_M,\sigma_a)$, respectively. The matrices $\Qh^{(s)}_M$ and $\Qh^{(a)}_M$ are then the orthogonal matrices whose $i$-th rows are the unit eigenvectors~\eqref{eq:secular-vectors} associated with $\nu^{(s)}_i$ and $\nu^{(a)}_i$; entrywise, for $i,m=1,\dots,M$,
\begin{equation}
\bigl[\Qh^{(s)}_M\bigr]_{i,m}=c^{(s)}_i\,\frac{p_m}{\mu_m-\nu^{(s)}_i},
\quad
\bigl[\Qh^{(a)}_M\bigr]_{i,m}=c^{(a)}_i\,\frac{p_m}{\mu_m-\nu^{(a)}_i},
\label{eq:Qrows}
\end{equation}
where $c^{(s)}_i,c^{(a)}_i>0$ normalize each row to unit length. By construction, $\Qh^{(s)}_M\G^{(s)}_M\Qh^{(s)\mathsf T}_M=\operatorname{diag}(\nu^{(s)}_1,\dots,\nu^{(s)}_M)$, and likewise for the antisymmetric branch. All these quantities depend only on $\rho$ and are computed offline.

\begin{theorem}[Self-similar factorization, even order]
\label{thm:even}
For even $N=2M$ and $\rho\in(0,1)$,
\begin{equation}
\W_N(\rho) = \PP_N\Bigl[\bigl(\Qh^{(s)}_M\W_M(\rho)\bigr)\oplus\bigl(\Qh^{(a)}_M\W_M(\rho)\bigr)\Bigr]\B_N ,
\label{eq:even-thm}
\end{equation}
where $\PP_N$ interleaves ($[\PP_N]_{2i,\,i}=[\PP_N]_{2i+1,\,M+i}=1$): the even-indexed KLT outputs are the sum branch in order, the odd-indexed outputs the difference branch in order.
\end{theorem}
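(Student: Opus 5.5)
The plan is to show that every row of the right-hand side of \eqref{eq:even-thm} is a unit eigenvector of $\T_N$, and that these rows appear in the order of increasing generator eigenvalue $\mu$, which is decreasing $\lambda$ by \eqref{eq:mulambda}. Orthogonality is immediate: $\B_N$, $\W_M$, $\Qh^{(s)}_M$, $\Qh^{(a)}_M$ and the permutation $\PP_N$ are all orthogonal, so the right-hand side $\Sb_N$ is orthogonal. Its $N$ rows are therefore orthonormal, and only the eigen-property and the ordering remain to be checked.

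\textbf{Step 1 (diagonalization).} By Corollary~\ref{cor:gen-even}, $\B_N\T_N\B_N^{\mathsf T}=\T^{(s)}_M\oplus\T^{(a)}_M$. Conjugating the sum block by $\W_M$ gives $\G^{(s)}_M$ in \eqref{eq:G}, and then by $\Qh^{(s)}_M$ gives $\operatorname{diag}(\nu^{(s)}_i)$ by construction; the difference block is handled the same way. Hence
\[
\bigl[(\Qh^{(s)}_M\W_M)\oplus(\Qh^{(a)}_M\W_M)\bigr]\B_N\,\T_N\,\B_N^{\mathsf T}\bigl[(\Qh^{(s)}_M\W_M)\oplus(\Qh^{(a)}_M\W_M)\bigr]^{\mathsf T}
=\operatorname{diag}(\nu^{(s)}_1,\dots,\nu^{(s)}_M,\nu^{(a)}_1,\dots,\nu^{(a)}_M).
\]
So each row is an eigenvector of $\T_N$, and hence of $\R_N$. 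The rows from the sum branch have the form $(\yb^{\mathsf T},\yb^{\mathsf T}\J_M)/\sqrt2$ and are therefore symmetric. The rows from the difference branch have the form $(\yb^{\mathsf T},-\yb^{\mathsf T}\J_M)/\sqrt2$ and are antisymmetric.

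\textbf{Step 2 (ordering).} This is the only genuine issue. By Proposition~\ref{prop:phase}, the spectrum of $\T_N$ is simple, and the modes alternate in parity: $m$ odd gives symmetric modes and $m$ even gives antisymmetric ones, with $\mu$ increasing in $m$. The $M$ symmetric eigenvectors found in Step~1 must therefore be exactly the modes with $m=1,3,\dots,N-1$, and their eigenvalues are the $\nu^{(s)}_i$. Since the $\nu^{(s)}_i$ are listed increasingly, row $i$ of the sum branch is mode $m=2i+1$, i.e.\ row $2i$ of $\W_N$ with $0$-based indexing. Likewise row $i$ of the difference branch is mode $m=2i+2$, i.e.\ row $2i+1$ of $\W_N$. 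This is precisely the interleaving $\PP_N$.

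The one point to check is that the symmetric eigenvalues really are $\{\nu^{(s)}_i\}$ and not a mixture across branches. This holds because the two blocks act on the invariant subspaces of symmetric and antisymmetric vectors, respectively. Simplicity of the spectrum then makes each unit eigenvector unique up to sign, so $\Sb_N=\W_N$ modulo the row-sign freedom of Remark~\ref{rem:signs}. The secular interlacing from \eqref{eq:secular} is not needed for the proof. It can serve as a consistency check: the $\nu^{(s)}$ lie below the $\mu$ and the $\nu^{(a)}$ above, consistent with the alternation.
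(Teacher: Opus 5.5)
Your proof is correct and follows the paper's argument: you conjugate $\T_N$ to diagonal form via Corollary~\ref{cor:gen-even} and \eqref{eq:G}, read the parity of each row off the butterfly, and fix the interleaving $\PP_N$ by the parity alternation of the ordered spectrum. The only difference is cosmetic: you take the alternation directly from the parity clause of Proposition~\ref{prop:phase}, whereas the paper cites the Cantoni--Butler alternation and uses Proposition~\ref{prop:phase} only to anchor it at $m=1$.
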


\begin{proof}
By Lemma~\ref{lem:fold-even}, Corollary~\ref{cor:gen-even}, and~\eqref{eq:G}, the matrix on the right without $\PP_N$ conjugates $\T_N$ to a diagonal matrix. Its rows are thus a complete orthonormal eigenvector system of $\T_N$, and hence of $\R_N$: the first $M$ rows symmetric, the last $M$ antisymmetric, each half ordered by decreasing $\R_N$-eigenvalue. Since $\T_N$ is tridiagonal centrosymmetric with distinct eigenvalues, the parities alternate along the ordered spectrum~\cite{CantoniButler}. By Proposition~\ref{prop:phase}, the alternation starts at the top of the $\R_N$-spectrum with a symmetric mode ($m=1$ odd). The permutation realizing this interleaving of the two internally ordered halves is $\PP_N$.
\end{proof}

\begin{remark}[Sign conventions]
\label{rem:signs}
Flipping the sign of row $m$ of $\W_M$ flips $p_m$ and a column-sign pattern inside $\Qh^{(s)},\Qh^{(a)}$, leaving the products in~\eqref{eq:even-thm} invariant. Any fixed convention may therefore be used, and~\eqref{eq:even-thm} determines $\W_N$ up to the usual row-sign freedom. The same holds for Theorem~\ref{thm:odd}.
\end{remark}

We now identify the $\rho\to1$ endpoint of the factorization, in terms of the transforms~\eqref{eq:trigeven}.

\begin{corollary}[$\rho\to1$: Chen's split of the DCT-II]
\label{cor:even-limit}
As $\rho\to1$ (up to row signs), $\Qh^{(s)}_M\W_M\to\Cb^{\mathrm{II}}_M$ and $\Qh^{(a)}_M\W_M\to\Cb^{\mathrm{IV}}_M$ in the fold coordinates, and~\eqref{eq:even-thm} degenerates to the classical splitting of the DCT-II into half-size DCT-II and DCT-IV~\cite{ChenSmithFralick,Wang84,Britanak},
\[
\Cb^{\mathrm{II}}_N \;=\; \PP_N\bigl(\Cb^{\mathrm{II}}_{M}\oplus\Cb^{\mathrm{IV}}_{M}\bigr)\B_N .
\]
Recursive application collapses the recursion of Theorem~\ref{thm:even} to the decomposition underlying Chen's fast DCT~\cite{ChenSmithFralick}.
\end{corollary}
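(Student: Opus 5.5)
The plan is to prove the limit by working with the products $\Qh^{(s)}_M\W_M$ and $\Qh^{(a)}_M\W_M$ directly, never with the factors separately. By the construction preceding Theorem~\ref{thm:even}, these products are orthonormal eigenbases, ordered by eigenvalue, of the branch generators $\T^{(s)}_M=\T_M-\sigma_s\e_{M-1}\e_{M-1}^{\mathsf T}$ and $\T^{(a)}_M=\T_M+\sigma_a\e_{M-1}\e_{M-1}^{\mathsf T}$ of Corollary~\ref{cor:gen-even}. Both generators are polynomial in $\rho$, so at $\rho=1$ we can simply read them off. Since $\sigma_s\to0$ and $\sigma_a\to2$, the sum-branch limit is the path-Laplacian-type matrix $\T_M(1)$, which has diagonal $(1,2,\dots,2,1)$ and off-diagonal entries $-1$. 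The difference-branch limit is the same matrix except that its last diagonal entry equals $3$.

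The first step is to identify the eigenvectors of these two limit matrices by the same boundary-condition argument used in the proof of Proposition~\ref{prop:phase}, with $\rho=1$.

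\begin{itemize}
\item For $\T_M(1)$ the fictitious-sample conditions are $w_{-1}=w_0$ and $w_M=w_{M-1}$, i.e.\ Neumann conditions at half-sample points. These give $w_k=\cos(\pi n(2k+1)/(2M))$ with eigenvalue $2-2\cos(\pi n/M)$, which are the rows of $\Cb^{\mathrm{II}}_M$.
\item For the difference branch the left condition is unchanged. The right condition becomes $w_M=-w_{M-1}$, since $1-w_M/w_{M-1}$ must equal $1+2$. This is an antisymmetric half-sample condition, whose solutions are $w_k=\cos(\pi(2n+1)(2k+1)/(4M))$, the rows of $\Cb^{\mathrm{IV}}_M$.
\end{itemize}

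In both cases I will check directly that the eigenvalues are distinct, and that their ordering matches the row order of the DCT, namely increasing generator eigenvalue, hence decreasing covariance eigenvalue.

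The second step is continuity. The eigenvalues of the limit matrices are simple. Eigenvectors of a symmetric matrix depend continuously on the matrix at a simple eigenvalue, up to sign, so the ordered eigenbases of $\T^{(s)}_M(\rho)$ and $\T^{(a)}_M(\rho)$ converge, modulo row signs, to $\Cb^{\mathrm{II}}_M$ and $\Cb^{\mathrm{IV}}_M$. The same argument applied to $\T_N(\rho)\to\T_N(1)$ gives $\W_N\to\Cb^{\mathrm{II}}_N$; this is also Clarke's result. Passing to the limit in \eqref{eq:even-thm} is legitimate because $\PP_N$ and $\B_N$ do not depend on $\rho$, and this yields the displayed splitting. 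The interleaving is inherited from the theorem. It can also be checked directly: even DCT-II rows are symmetric and odd rows antisymmetric, and the parity frequencies $\pi n/M$ and $\pi(2n+1)/(2M)$ on the grid $\pi m/N$ alternate.

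The main obstacle is that the factorization is degenerate at $\rho=1$ itself. There $\R_N$ is singular, and the constants $\sigma_s$, $p_m$, $\nu_i$ in \eqref{eq:Qrows} degenerate: $\Qh^{(s)}_M\to\I$, with poles coalescing with roots. So one cannot take limits of $\Qh$ and $\W_M$ separately, because $\W_M(\rho)$ tends to $\Cb^{\mathrm{II}}_M$ while $\Qh^{(a)}_M$ tends to the DCT-II$\to$DCT-IV conversion. I avoid this by arguing only through the products, which are eigenbases of the generators, and by working in the generator domain, where everything is continuous at $\rho=1$. Simplicity of the limiting spectra is the only point that must be checked, and it follows from the explicit cosine eigenvalues.

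Finally, for the recursive statement, I apply the same limit at each level. The sum branch reproduces $\Cb^{\mathrm{II}}_{M}$, to which the corollary applies again. The DCT-IV branches are left as the terminal blocks of the recursion, which is exactly the structure of Chen's algorithm.
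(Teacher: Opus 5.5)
Your proposal is correct and is essentially the paper's proof: at $\rho=1$ the branch generators are $\T_M(1)$ (free ends, DCT-II modes) and its copy with fold corner $1+\sigma_a=3$ (fictitious sample $y_M=-y_{M-1}$, DCT-IV modes), the ascending eigenvalue orders match the index orders, and continuity of the simple eigensystems of the products $\Qh^{(s)}_M\W_M$ and $\Qh^{(a)}_M\W_M$ completes the argument. Two side remarks need small fixes, neither affecting the proof: the corner relation should read $2-w_M/w_{M-1}=1+\sigma_a=3$, and the factors do converge separately modulo signs (for instance $\Qh^{(s)}_M\to\I$, which your own sentence about $\W_M$ and $\Qh^{(a)}_M$ already presupposes), so arguing through the products is a convenience that sidesteps the degenerate Cauchy formulas, not a necessity.
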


\begin{proof}
At $\rho=1$ both corner entries of $\T^{(s)}_M$ equal $1$ (free ends): its modes are $\cos\bigl(\pi n(k+\tfrac12)/M\bigr)$, the DCT-II. The corners of $\T^{(a)}_M$ are $1$ and $1+\sigma_a=3$, i.e.\ the fictitious sample obeys $y_M=-y_{M-1}$: its modes are $\cos\bigl(\pi(2n+1)(k+\tfrac12)/(2M)\bigr)$, the DCT-IV. Both ascending eigenvalue orders coincide with the index orders, and continuity of the simple eigensystems in $\rho$ completes the proof.
\end{proof}

The two strengths~\eqref{eq:strengths} are asymmetric: $\sigma_s\to0$ but $\sigma_a\to2$ as $\rho\to1$. This is the exact-KLT form of the asymmetry of the classical split, which pairs the DCT-II with the DCT-IV rather than with itself. Folding is asymptotically free on the sum branch at high correlation. On the difference branch, it always costs a bounded rotation.

\begin{remark}[Genuine recursion]
\label{rem:recursion}
The half-size transform in~\eqref{eq:even-thm} is the exact KLT of the same source at order $M$: if $M$ is even, Theorem~\ref{thm:even} applies again; if odd, Theorem~\ref{thm:odd} below. The recursion therefore continues at every order, terminating at $\W_2$---the $\rho$-independent $45^\circ$ butterfly---and at $\W_1=(1)$. For $N$ a power of two, it unrolls into $\log_2N$ butterfly levels and rank-one-generated rotations.
\end{remark}

\section{The odd-order factorization}
\label{sec:odd}

At the fixed-transform level, the parity split of the \emph{odd}-length DCT-II behaves differently from the even split. It produces not two copies of a common core but two different transforms of the odd trigonometric family: a DCT-VI on the symmetric half, which includes the center sample, and a DST-VII on the antisymmetric half~\cite{ReznikICASSP}. The DST-VII, in turn, has an independent life in coding theory. It arises as the $\rho\to1$ limit of the KLT of an AR(1) block \emph{with a known boundary sample}, which motivated its adoption for residual coding in modern video standards~\cite{HanSaxenaRose,ReznikICASSP,ZhaoVVC}. This section exhibits the exact transform behind that limit, at every $\rho$. Both branches of the odd fold are driven by a single operator: the exact KLT of the block conditioned on its boundary sample. The center sample enters through a rank-one border---the arrowhead update of Section~\ref{sec:secular}.

\subsection{The fold, its conditional reading, and the residual KLT}

Let $N=2M+1$. The parity decomposition now comprises $M$ mirrored sums, the center sample $c=x_M$, and $M$ mirrored differences, realized by the orthonormal butterfly
\begin{equation}
\B_N \;=\; \frac{1}{\sqrt2}
\begin{pmatrix}
\I_M & & \J_M \\
& \sqrt2 & \\
\I_M & & -\J_M
\end{pmatrix},
\label{eq:butterfly-odd}
\end{equation}
with output ordering $(\ub,c,\vb)$; the symmetric branch, of size $M+1$, is $(\ub,c)$.

\begin{lemma}[Fold of the covariance, odd order]
\label{lem:fold-odd}
Let $\wb=\rho\,\R_M(\rho)\,\e_{M-1}$, i.e.\ $w_k=\rho^{\,M-k}$. Then
\begin{equation}
\B_N\R_N\B_N^{\mathsf T}
=
\setlength{\arraycolsep}{3pt}
\begin{pmatrix}
\R_M+\wb\wb^{\mathsf T} & \sqrt2\,\wb \\
\sqrt2\,\wb^{\mathsf T} & 1
\end{pmatrix}
\oplus
\bigl(\R_M-\wb\wb^{\mathsf T}\bigr),
\label{eq:fold-odd}
\end{equation}
with $\R_M=\R_M(\rho)$ and $\R_N=\R_N(\rho)$.
\end{lemma}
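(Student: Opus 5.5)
The plan is a direct block computation, following the proof of Lemma~\ref{lem:fold-even}. I write $\R_N$ in $3\times3$ block form along the partition $\{0,\dots,M-1\}$, $\{M\}$, $\{M+1,\dots,2M\}$:
\[
\R_N=\begin{pmatrix}\R_M & \mathbf{a} & \Cb\\ \mathbf{a}^{\mathsf T} & 1 & \mathbf{a}^{\mathsf T}\J_M\\ \Cb^{\mathsf T} & \J_M\mathbf{a} & \R_M\end{pmatrix}.
\]
Here $a_i=\rho^{M-i}$ is the correlation of $x_i$ with the center sample. By Toeplitz structure, the lower-right block is $\R_M$. Centrosymmetry gives $\J_M\R_M\J_M=\R_M$ and makes the bottom border the mirror of the top one. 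In particular $\mathbf{a}=\wb$, and $\wb=\rho\R_M\e_{M-1}$ holds because $[\R_M\e_{M-1}]_k=\rho^{M-1-k}$.

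Next I compute the corner block $\Cb$ after the mirror reindexing. Since $i\le M-1$ and $N-1-j\ge M+1$, the $(i,j)$ entry of $\Cb\J_M$ is $R_{i,N-1-j}=\rho^{N-1-i-j}=\rho^{2M-i-j}=\rho^{M-i}\rho^{M-j}=w_iw_j$. This gives $\Cb\J_M=\wb\wb^{\mathsf T}$, and by symmetry $\J_M\Cb^{\mathsf T}=\wb\wb^{\mathsf T}$ as well.

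Then I conjugate by $\B_N$ from~\eqref{eq:butterfly-odd}, whose rows produce $\ub=(\xb_{\rm top}+\J_M\xb_{\rm bot})/\sqrt2$, $c$, and $\vb=(\xb_{\rm top}-\J_M\xb_{\rm bot})/\sqrt2$.
\begin{itemize}
\item The $\ub\ub$ block is $\tfrac12(\R_M+\Cb\J_M+\J_M\Cb^{\mathsf T}+\J_M\R_M\J_M)=\R_M+\wb\wb^{\mathsf T}$.
\item The $\vb\vb$ block is $\R_M-\wb\wb^{\mathsf T}$, by the same computation with the cross terms entering with a minus sign.
\item The $\ub\vb$ block is $\tfrac12(\R_M-\Cb\J_M+\J_M\Cb^{\mathsf T}-\J_M\R_M\J_M)=0$.
\item The $\ub c$ entry is $\tfrac{1}{\sqrt2}(\wb+\J_M\J_M\wb)=\sqrt2\,\wb$.
\item The $\vb c$ entry is $\tfrac{1}{\sqrt2}(\wb-\wb)=0$.
\item The $cc$ entry is $1$.
\end{itemize}
Together these give~\eqref{eq:fold-odd}.

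The only point that needs care is the index bookkeeping with $\J_M$. I must check that the bottom border $(R_{M,M+1+k})_k$, after reversal by $\J_M$, equals $\wb$. Its entries are $\rho^{1+k}$, so the reversed $k$-th entry is $\rho^{M-k}=w_k$, which confirms it. I must also confirm that the exponent $N-1-i-j$ stays nonnegative, which is what allows the absolute value to be dropped. Beyond that bookkeeping, the argument is routine.
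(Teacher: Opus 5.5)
Your proposal is correct and is essentially the paper's own argument. The paper likewise reduces the lemma to the wrap term $R_{i,N-1-j}=\rho^{2M-i-j}=w_iw_j$ and the center couplings $\sqrt2\,w_i$, $0$, and $1$; your explicit $3\times3$ block bookkeeping spells out what the paper delegates to centrosymmetry and Lemma~\ref{lem:fold-even}, including the check that the $\ub\vb$ cross block vanishes via $\Cb\J_M=\J_M\Cb^{\mathsf T}$ and $\J_M\R_M\J_M=\R_M$.
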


\begin{proof}
As in Lemma~\ref{lem:fold-even}, the wrap term is $R_{i,N-1-j}=\rho^{2M-i-j}=w_iw_j$; the center couplings are $E[u_ic]=\tfrac{1}{\sqrt2}\cdot2\rho^{M-i}=\sqrt2\,w_i$, $E[v_ic]=0$, $E[c^2]=1$.
\end{proof}

\begin{remark}[Conditional reading: one process drives both branches]
\label{rem:conditional}
The vector $\wb$ is exactly the correlation of the block with its center sample: $\operatorname{Cov}(x_i,x_M)=\rho^{M-i}=w_i$ and $\operatorname{Var}(x_M)=1$. Hence $\R_M-\wb\wb^{\mathsf T}=\operatorname{Cov}\bigl((x_0,\dots,x_{M-1})\mid x_M\bigr)$: the covariance of an AR(1) block \emph{conditioned on---equivalently, after optimal one-sided prediction from---its boundary sample}. This one covariance governs both branches. It governs the differences outright, since $\vb$ is uncorrelated with $c$ by~\eqref{eq:fold-odd}. By the Gaussian conditioning formula, $\operatorname{Cov}(\ub\mid c)=(\R_M+\wb\wb^{\mathsf T})-2\wb\wb^{\mathsf T}=\R_M-\wb\wb^{\mathsf T}$ as well: \emph{conditioned on the center sample, the folded sums and differences are identically distributed}. This is the nontrivial counterpart of the even fold: there the two branches share the base covariance $\R_M$ by construction, so the shared inner operator $\W_M$ is expected. Here the unconditional branch covariances differ, and coincide only after conditioning.
\end{remark}

\begin{corollary}[Generator-domain form]
\label{cor:gen-odd}
With $\T^{(\delta)}_M\triangleq(1-\rho^2)(\R_M-\wb\wb^{\mathsf T})^{-1}$, the generator of the residual process,
\begin{equation}
\T^{(\delta)}_M = \T_M(\rho)+\rho^2\,\e_{M-1}\e_{M-1}^{\mathsf T},
\label{eq:A}
\end{equation}
and the fold block-diagonalizes the generator as
\begin{equation}
\begin{gathered}
\B_N\,\T_N\,\B_N^{\mathsf T}
=\Sb_{M+1}\oplus\T^{(\delta)}_M,\\
\Sb_{M+1}=
\setlength{\arraycolsep}{2pt}
\begin{pmatrix}
\T^{(\delta)}_M & -\sqrt2\rho\,\e_{M-1} \\
-\sqrt2\rho\,\e_{M-1}^{\mathsf T} & 1+\rho^2
\end{pmatrix}\!.
\end{gathered}
\label{eq:S}
\end{equation}
\end{corollary}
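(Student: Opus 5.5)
The proof has two parts, mirroring the even-order Corollary~\ref{cor:gen-even}. First, the residual generator~\eqref{eq:A} follows from Sherman--Morrison; second, the fold of $\T_N$ follows either by inverting the block form~\eqref{eq:fold-odd} or, more directly, by conjugating the explicit tridiagonal $\T_N$ by the orthogonal $\B_N$.

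\emph{Step 1 (residual generator).} Since $\wb=\rho\,\R_M\e_{M-1}$, we have $\R_M^{-1}\wb=\rho\,\e_{M-1}$ and $\wb^{\mathsf T}\R_M^{-1}\wb=\rho^2 R_{M-1,M-1}=\rho^2$. Applying~\eqref{eq:sm} with $\ub=-\wb$, $\vb=\wb$ gives
\[
(\R_M-\wb\wb^{\mathsf T})^{-1}=\R_M^{-1}+\frac{\rho^2}{1-\rho^2}\,\e_{M-1}\e_{M-1}^{\mathsf T},
\]
which is legitimate because $1-\rho^2\neq0$. Multiplying by $1-\rho^2$ yields~\eqref{eq:A}. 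In the terminal case $M=1$ the convention $\T_1=1-\rho^2$ gives $\T^{(\delta)}_1=1$. This matches $(1-\rho^2)/(1-\rho^2)$, so the statement holds there too.

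\emph{Step 2 (fold of the generator).} Because $\B_N$ is orthogonal, $\B_N\T_N\B_N^{\mathsf T}=(1-\rho^2)(\B_N\R_N\B_N^{\mathsf T})^{-1}$. By Lemma~\ref{lem:fold-odd} this block-diagonalizes with the antisymmetric block $(1-\rho^2)(\R_M-\wb\wb^{\mathsf T})^{-1}=\T^{(\delta)}_M$. It remains to compute the symmetric block. I would do this by direct conjugation rather than inversion. Write $\T_N$ in $3\times3$ block form with respect to the split (first $M$, center, last $M$). The top-left block is $\T_M$ with its last diagonal entry raised to $1+\rho^2$, that is, $\T_M+\rho^2\e_{M-1}\e_{M-1}^{\mathsf T}=\T^{(\delta)}_M$. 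The bottom-right block is its mirror $\J_M\T^{(\delta)}_M\J_M$. The center entry is $1+\rho^2$. The couplings are $-\rho\,\e_{M-1}$ from the left half to the center, and $-\rho\,\J_M\e_{M-1}=-\rho\,\e_0$ from the right half to the center; the two halves do not couple to each other directly. Applying the butterfly~\eqref{eq:butterfly-odd}:
\begin{itemize}
\item The sum/sum and difference/difference blocks become $\tfrac12(\T^{(\delta)}_M+\J_M\J_M\T^{(\delta)}_M\J_M\J_M)=\T^{(\delta)}_M$, since the cross terms vanish.
\item The sum/center coupling becomes $\tfrac{1}{\sqrt2}(-\rho\e_{M-1}-\rho\J_M\e_0)=-\sqrt2\rho\,\e_{M-1}$.
\item The difference/center coupling cancels.
\end{itemize}
This reproduces $\Sb_{M+1}\oplus\T^{(\delta)}_M$ exactly. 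It also gives an independent check of Step 1, since both routes must produce the same $(\delta)$ block.

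The only delicate point is bookkeeping the index reversal, namely confirming that the fold coordinate of the mirrored half is again $\e_{M-1}$ after applying $\J_M$. The case $M=1$ should be checked separately, since there $\T_M$ degenerates: the top-left block of $\T_3$ is the scalar $1$, which equals $\T^{(\delta)}_1=1$, consistent with Step 1.
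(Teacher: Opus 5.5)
Your proposal is correct and follows essentially the same route as the paper. It uses Sherman--Morrison with $\R_M^{-1}\wb=\rho\,\e_{M-1}$ and $\wb^{\mathsf T}\R_M^{-1}\wb=\rho^2$ for~\eqref{eq:A}, then folds the tridiagonal $\T_N$ directly: the band kills the wrap terms, and the center couplings $\tfrac{1}{\sqrt2}(-\rho\mp\rho)$ give $-\sqrt2\rho\,\e_{M-1}$ for the sums and $0$ for the differences. Your cross-check by inverting Lemma~\ref{lem:fold-odd} and your separate check of $M=1$ via the convention $\T_1=1-\rho^2$ are sound additions that the paper leaves implicit.
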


\begin{proof}
For $\T^{(\delta)}_M$: $\R_M^{-1}\wb=\rho\,\e_{M-1}$, $\wb^{\mathsf T}\R_M^{-1}\wb=\rho^2$, and~\eqref{eq:sm} with $\ub=-\wb$ gives $(\R_M-\wb\wb^{\mathsf T})^{-1}=\R_M^{-1}+\tfrac{\rho^2}{1-\rho^2}\e_{M-1}\e_{M-1}^{\mathsf T}$. For the block form, fold~\eqref{eq:generator} directly: the band kills all wrap terms ($|i-(N-1-j)|\ge2$ for $i,j\le M-1$), so both parity blocks equal the leading principal $M\times M$ submatrix of $\T_N$, which is $\T_M+\rho^2\e_{M-1}\e_{M-1}^{\mathsf T}=\T^{(\delta)}_M$ (interior value $1+\rho^2$ at the fold corner). The only center couplings are $\tfrac{1}{\sqrt2}(-\rho\mp\rho)$, i.e.\ $-\sqrt2\rho$ for the sums and $0$ for the differences, with $T_{M,M}=1+\rho^2$.
\end{proof}

Again a dense covariance perturbation collapses to one boundary scalar. But the modified boundary condition is now different in kind: replacing the corner $1$ of $\T_M$ by $1+\rho^2$ replaces the condition of~\eqref{eq:robin} by $w_M=0$---the zero that every antisymmetric mode of the full problem must place at the center sample. The strength is $\rho^2$, versus $\rho(1\mp\rho)$ for the even fold. The sum-branch generator contains $\T^{(\delta)}_M$ as its leading principal block---the precision-matrix face of Remark~\ref{rem:conditional}. The fold thus hands us one rank-one update and one rank-one \emph{border}: an arrowhead.

Let $\W_M$, $\M_M$, $\pb$ be as in Section~\ref{sec:even}. Conjugating $\T^{(\delta)}_M$ by $\W_M$ gives $\G^{(\delta)}_M=\M_M+\rho^2\pb\pb^{\mathsf T}$, an instance of~\eqref{eq:secular} with eigenvalues $\delta_1<\dots<\delta_M$, collected in $\boldsymbol{\Delta}^{(\delta)}_M=\operatorname{diag}(\delta_1,\dots,\delta_M)$. Let $\Qh^{(\delta)}_M$ be its orthogonal diagonalizer, with rows of the same Cauchy form as~\eqref{eq:Qrows}:
\begin{equation}
\bigl[\Qh^{(\delta)}_M\bigr]_{i,m}=c^{(\delta)}_i\,\frac{p_m}{\mu_m-\delta_i},
\qquad i,m=1,\dots,M,
\label{eq:Qdrows}
\end{equation}
with $c^{(\delta)}_i>0$ normalizing each row to unit length. Define the \emph{residual KLT}
\begin{equation}
\V_M(\rho) \;\triangleq\; \Qh^{(\delta)}_M\,\W_M(\rho),
\qquad
\V_M\,\T^{(\delta)}_M\,\V_M^{\mathsf T}=\boldsymbol{\Delta}^{(\delta)}_M .
\label{eq:V}
\end{equation}
Its rows are the unit eigenvectors of the conditional covariance, ordered by decreasing variance $(1-\rho^2)/\delta_m$. That is, $\V_M$ is, literally, the KLT of the one-sided prediction residual---reached from the unconditional half-size KLT by one secular rotation. Since $\T^{(\delta)}_M$ is also tridiagonal with nonzero off-diagonals for $\rho>0$, the argument of Section~\ref{sec:even} applies again: its spectrum is simple, and the entries of $\qb\triangleq\V_M\e_{M-1}$ are all nonzero.

\subsection{The factorization theorem and its endpoints}

All the pieces are now in place. Conjugating $\Sb_{M+1}$ by $\V_M\oplus1$ leaves exactly an arrowhead,
\begin{equation}
(\V_M{\oplus}1)\Sb_{M+1}(\V_M{\oplus}1)^{\mathsf T}\!
=\Hb_{M+1}\triangleq
\setlength{\arraycolsep}{1pt}
\begin{pmatrix}
\boldsymbol{\Delta}^{(\delta)}_M & -\sqrt2\rho\qb \\
-\sqrt2\rho\qb^{\mathsf T} & 1{+}\rho^2
\end{pmatrix}\!,
\label{eq:H}
\end{equation}
an instance of~\eqref{eq:arrowhead-interlacing}--\eqref{eq:arrowhead-vectors}. Let $\A_{M+1}$ be its orthogonal diagonalizer: the $i$-th row of $\A_{M+1}$ is the unit eigenvector~\eqref{eq:arrowhead-vectors} of $\Hb_{M+1}$ associated with its $i$-th smallest eigenvalue $\nu_i$.

\begin{theorem}[Self-similar factorization, odd order]
\label{thm:odd}
For odd $N=2M+1$ and $\rho\in(0,1)$,
\begin{equation}
\W_N(\rho) = \PP_N\bigl[\A_{M+1}\bigl(\V_M(\rho)\oplus1\bigr)\,\oplus\,\V_M(\rho)\bigr]\B_N ,
\label{eq:odd-thm}
\end{equation}
with $\B_N$ the odd butterfly~\eqref{eq:butterfly-odd} and $\PP_N$ the interleaving permutation ($[\PP_N]_{2i,\,i}=1$, $i\le M$; $[\PP_N]_{2i+1,\,M+1+i}=1$, $i\le M-1$): even-indexed KLT outputs are the sum branch in order, odd-indexed the difference branch in order.
\end{theorem}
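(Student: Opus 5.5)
The plan follows the proof of Theorem~\ref{thm:even} step by step. First we check that the bracketed matrix without $\PP_N$, namely $\mathbf{F}_N\triangleq\bigl[\A_{M+1}(\V_M\oplus1)\oplus\V_M\bigr]\B_N$, is orthogonal. Each factor is orthogonal: $\B_N$ by~\eqref{eq:butterfly-odd}, $\V_M$ as a product of orthogonal matrices, and $\A_{M+1}$ as the eigenvector matrix of the symmetric $\Hb_{M+1}$. The main step is to show that $\mathbf{F}_N$ diagonalizes $\T_N$. By Corollary~\ref{cor:gen-odd}, $\B_N\T_N\B_N^{\mathsf T}=\Sb_{M+1}\oplus\T^{(\delta)}_M$. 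On the difference block, \eqref{eq:V} gives $\V_M\T^{(\delta)}_M\V_M^{\mathsf T}=\boldsymbol{\Delta}^{(\delta)}_M$. On the sum block, \eqref{eq:H} gives $(\V_M\oplus1)\Sb_{M+1}(\V_M\oplus1)^{\mathsf T}=\Hb_{M+1}$. Here one should check that the border becomes $-\sqrt2\rho\,\V_M\e_{M-1}=-\sqrt2\rho\,\qb$, and that the leading block is $\V_M\T^{(\delta)}_M\V_M^{\mathsf T}$, because $\Sb_{M+1}$ has $\T^{(\delta)}_M$ as its leading principal block. Then $\A_{M+1}\Hb_{M+1}\A_{M+1}^{\mathsf T}=\operatorname{diag}(\nu_1,\dots,\nu_{M+1})$.

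For this to be a genuine orthonormal diagonalizer with the stated rows, the arrowhead hypotheses of~\eqref{eq:arrowhead-interlacing}--\eqref{eq:arrowhead-vectors} must hold. We need distinct $\delta_m$ and all border entries $-\sqrt2\rho q_m\neq0$. Both were established just before~\eqref{eq:H}: $\T^{(\delta)}_M$ is tridiagonal with off-diagonals $-\rho\ne0$. The degenerate case $M=1$ uses $\T_1=(1-\rho^2)$, so $\T^{(\delta)}_1=1$, $\V_1=(1)$ and $\qb=(1)$, and everything remains valid.

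It follows that the rows of $\mathbf{F}_N$ form a complete orthonormal eigenvector system of $\T_N$, hence of $\R_N$. The first $M+1$ rows are supported on $(\ub,c)$ through $\B_N$, so they are symmetric vectors. The last $M$ rows are antisymmetric. Within each half, the rows are ordered by increasing generator eigenvalue ($\nu_i$, resp.\ $\delta_i$), which by~\eqref{eq:mulambda} is decreasing $\R_N$-eigenvalue.

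The remaining step is the ordering, and it is the one needing care. By Proposition~\ref{prop:phase} the spectrum of $\R_N$ is simple, and $\wb^{(m)}$ is symmetric for odd $m$ and antisymmetric for even $m$. So the $m$-th largest eigenvalue belongs to the sum branch exactly when $m$ is odd. There are $M+1$ odd indices and $M$ even ones in $1,\dots,2M+1$, which matches the branch sizes. Because the eigenvalues are simple, each symmetric eigenvector of $\R_N$ coincides up to sign with a row of the sum block. Sorting within the branch then places the $i$-th sum row at global position $2i$ (0-indexed) and the $i$-th difference row at position $2i+1$, which is precisely $\PP_N$.

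Row-sign ambiguity is absorbed as in Remark~\ref{rem:signs}. I expect no real obstacle beyond bookkeeping. The two points to verify carefully are the exact form of the conjugated border $\qb$ in~\eqref{eq:H}, and the use of the parity statement in Proposition~\ref{prop:phase} rather than only the Cantoni--Butler alternation, which anchors the interleaving with a symmetric mode at both ends of the spectrum.
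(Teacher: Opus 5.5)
Your proof is correct. Its diagonalization step is the paper's: Corollary~\ref{cor:gen-odd}, then \eqref{eq:V} on the difference block, and \eqref{eq:H} followed by $\A_{M+1}$ on the sum block. Your explicit check of the terminal case $M=1$ is a useful addition.

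The two arguments differ only at the ordering step. You obtain $\PP_N$ from three facts: the parity classification of Proposition~\ref{prop:phase}, simplicity of the spectrum, and the count of $M+1$ symmetric against $M$ antisymmetric rows. That argument is complete, and it would apply unchanged to Theorem~\ref{thm:even}.

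The paper instead reads the interleaving off the factorization itself. You checked that the $\delta_m$ are distinct and that all entries of $-\sqrt2\rho\,\qb$ are nonzero, but used this only to validate the arrowhead eigenvector formula. Those same hypotheses also give the outside interlacing~\eqref{eq:arrowhead-interlacing}, $\nu_1<\delta_1<\nu_2<\dots<\delta_M<\nu_{M+1}$. This shows directly that the spectrum alternates and begins and ends with the sum branch. In the paper, the Cantoni--Butler alternation and Proposition~\ref{prop:phase} serve only as consistency checks.

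Your route depends on the explicit sinusoidal solution of the eigenproblem. The paper's route shows that the ordering is forced by the arrowhead structure alone. That intrinsic argument is not available at even order: there the two branches are uncoupled rank-one updates of the same $\M_M$. Secular interlacing places $\nu^{(a)}_i$ and $\nu^{(s)}_{i+1}$ in the same gap $(\mu_i,\mu_{i+1})$ but does not order them.
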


\begin{proof}
By Corollary~\ref{cor:gen-odd} and~\eqref{eq:V}--\eqref{eq:H}, the block matrix (without $\PP_N$) times $\B_N$ conjugates $\T_N$ to $\operatorname{diag}(\nu_1,\dots,\nu_{M+1})\oplus\operatorname{diag}(\delta_1,\dots,\delta_M)$. Its rows are therefore a complete orthonormal eigenvector system: the sum rows symmetric, the difference rows antisymmetric, each branch internally ordered. The global order is the parity alternation of Section~\ref{sec:source}. Here it also follows unconditionally from the structure: all entries of $-\sqrt2\rho\,\qb$ are nonzero and the $\delta_m$ are distinct, so the arrowhead interlacing~\eqref{eq:arrowhead-interlacing} gives $\nu_1<\delta_1<\nu_2<\dots<\delta_M<\nu_{M+1}$. The spectrum thus alternates, beginning and ending with the sum branch---exactly the interleaving $\PP_N$, and consistent with Proposition~\ref{prop:phase} (the $M{+}1$ symmetric modes are those of odd $m$).
\end{proof}

Since $\V_M=\Qh^{(\delta)}_M\W_M$, the recursion continues through $\W_M$ (Remark~\ref{rem:recursion}). Composing the same identities closes it within the residual family as well.

\begin{corollary}[Recursion within the residual family]
\label{cor:resfam}
For $\rho\in(0,1)$ and even $N=2M$,
\begin{equation}
\V_N=\Qh^{(\delta)}_N\PP_N\bigl[\bigl(\Qh^{(s)}_M\Qh^{(\delta)\mathsf T}_M\V_M\bigr)\oplus\bigl(\Qh^{(a)}_M\Qh^{(\delta)\mathsf T}_M\V_M\bigr)\bigr]\B_N,
\label{eq:resfam}
\end{equation}
while for odd $N=2M+1$,
\begin{equation}
\V_N=\Qh^{(\delta)}_N\PP_N\bigl[\A_{M+1}(\V_M{\oplus}1)\oplus\V_M\bigr]\B_N,
\label{eq:resfam-odd}
\end{equation}
with every stage of~\eqref{eq:resfam-odd} a single secular or arrowhead stage.
\end{corollary}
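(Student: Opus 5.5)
The plan is to compose the residual-KLT definition~\eqref{eq:V} at order $N$ with the factorizations of $\W_N$ already proved. By~\eqref{eq:V} applied at order $N$ (rather than $M$), $\V_N=\Qh^{(\delta)}_N\W_N$. Here $\Qh^{(\delta)}_N$ is the orthogonal secular diagonalizer of $\M_N+\rho^2\pb_N\pb_N^{\mathsf T}$, with $\pb_N=\W_N\e_{N-1}$. Its existence needs exactly the hypotheses verified in Section~\ref{sec:odd}: $\T_N$ is tridiagonal with off-diagonals $-\rho\neq0$, so its spectrum is simple and all entries of $\pb_N$ are nonzero. Thus the whole statement reduces to substituting an expression for $\W_N$ and rewriting any $\W_M$ that appears in terms of $\V_M$.

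\emph{Even case.} Substitute Theorem~\ref{thm:even} for $\W_N$, which gives
\[
\V_N=\Qh^{(\delta)}_N\PP_N\bigl[(\Qh^{(s)}_M\W_M)\oplus(\Qh^{(a)}_M\W_M)\bigr]\B_N .
\]
Since $\Qh^{(\delta)}_M$ is orthogonal, inverting~\eqref{eq:V} at order $M$ gives $\W_M=\Qh^{(\delta)\mathsf T}_M\V_M$. Inserting this into both branches yields~\eqref{eq:resfam}. Remark~\ref{rem:signs} covers the sign conventions: any row-sign choice made for $\W_M$ propagates consistently through $\pb$ and the Cauchy columns, so the products are unchanged up to row signs of $\V_N$.

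\emph{Odd case.} Substitute Theorem~\ref{thm:odd} for $\W_N$ and left-multiply by $\Qh^{(\delta)}_N$. This gives~\eqref{eq:resfam-odd} directly, because~\eqref{eq:odd-thm} is already written in terms of $\V_M$. It remains to justify the closing claim that every stage is a single secular or arrowhead stage. The stages are as follows. $\B_N$ and $\PP_N$ are a butterfly and a permutation. $\V_M$ is the recursive call within the residual family. $\A_{M+1}$ is the arrowhead diagonalizer of~\eqref{eq:H}. The outer $\Qh^{(\delta)}_N$ is one rank-one secular rotation of type~\eqref{eq:secular}--\eqref{eq:secular-vectors} with strength $\rho^2$ and fold vector $\pb_N$. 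Unlike the even case, no compensating product $\Qh^{(\cdot)}_M\Qh^{(\delta)\mathsf T}_M$ appears.

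The main point requiring care is the outer stage $\Qh^{(\delta)}_N$. Its Cauchy data are the full-order eigenvalues $\mu_m$ of $\T_N$ and the fold vector $\pb_N$, and these must be available. The $\mu_m$ come from Proposition~\ref{prop:phase}. The fold vector $\pb_N=\W_N\e_{N-1}$ can be read off the last column of the factorization itself, since $\B_N\e_{N-1}$ has support only on the fold coordinates of the two branches. I would also state explicitly the terminal cases $\V_1$, with $\T^{(\delta)}_1=1$ and $\V_1=(1)$, and $M=1$ in the odd case, where $\T_1$ is defined as in Section~\ref{sec:generator}. This confirms that the recursion is well founded down to order one.
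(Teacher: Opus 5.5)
Your proposal is correct and follows the paper's own argument: write $\V_N=\Qh^{(\delta)}_N\W_N$, then substitute Theorem~\ref{thm:even} with $\W_M=\Qh^{(\delta)\mathsf T}_M\V_M$ in the even case, or left-multiply Theorem~\ref{thm:odd} by $\Qh^{(\delta)}_N$ in the odd case. One slip in a side remark that the proof does not depend on: $\B_N\e_{N-1}=\tfrac{1}{\sqrt2}(\e_0;-\e_0)$ (with a zero center entry when $N$ is odd) is supported on coordinate $0$ of each branch, which is the outer end, not on the fold coordinate $\e_{M-1}$.
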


\begin{proof}
For~\eqref{eq:resfam}, substitute~\eqref{eq:V} at orders $N$ and $M$ into~\eqref{eq:even-thm}; for~\eqref{eq:resfam-odd}, left-multiply~\eqref{eq:odd-thm} by $\Qh^{(\delta)}_N$.
\end{proof}

We now identify the $\rho\to1$ endpoint of the factorization, in terms of the transforms~\eqref{eq:trigodd}.

\begin{corollary}[$\rho\to1$: the odd-length DCT-II split]
\label{cor:odd-limit}
As $\rho\to1$ (up to row signs): $\V_M(\rho)\to\Db_\pm\Sb^{\mathrm{VII}}_M\J_M$ with $\Db_\pm=\operatorname{diag}((-1)^n)$, and $\A_{M+1}(\V_M\oplus1)\to\Cb^{\mathrm{VI}}_{M+1}$ in the fold coordinates $(\ub,c)$, the weight $\gamma_M=1/\sqrt2$ arising automatically from the center normalization of~\eqref{eq:butterfly-odd}. Hence~\eqref{eq:odd-thm} degenerates to the parity splitting of the odd-length DCT-II into the DCT-VI and DST-VII~\cite{ReznikICASSP},
\[
\Cb^{\mathrm{II}}_N = \PP_N\bigl(\Cb^{\mathrm{VI}}_{M+1}\oplus\Db_\pm\Sb^{\mathrm{VII}}_M\J_M\bigr)\B_N .
\] In particular, the exact residual KLT $\V_M(\rho)$ is, at every $\rho\in(0,1)$, the transform whose $\rho\to1$ limit is the DST-VII.
\end{corollary}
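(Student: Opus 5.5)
Mirroring the proof of Corollary~\ref{cor:even-limit}, I would pass to $\rho=1$ in the generator domain, read off the modes of the limiting boundary-value problems, and conclude by continuity. By Corollary~\ref{cor:gen-odd}, at $\rho=1$ the residual generator $\T^{(\delta)}_M$ is the second-difference matrix with corners $1$ (at $k=0$) and $2$ (at the fold $k=M-1$). In the language of~\eqref{eq:robin} these are a free (Neumann) end $w_{-1}=w_0$ and a Dirichlet condition $w_M=0$. The sinusoids obeying both are $w_k=\cos\bigl(\omega(k+\tfrac12)\bigr)$ with $\omega(M+\tfrac12)=(n+\tfrac12)\pi$, i.e.\ $\omega_n=\pi(2n+1)/(2M+1)$. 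Their eigenvalues $2-2\cos\omega_n$ increase with $n$, so this is also the ascending order in $\delta$.

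Next I identify these modes with the DST-VII. Substituting $k=M-1-k'$ gives $\cos\bigl(\omega_n(M-k'-\tfrac12)\bigr)=\cos\bigl((n+\tfrac12)\pi-\omega_n(k'+1)\bigr)=(-1)^n\sin\bigl(\omega_n(k'+1)\bigr)$. This is exactly row $n$ of $\Db_\pm\Sb^{\mathrm{VII}}_M\J_M$, and the normalization $2/\sqrt{2M+1}$ is forced by orthonormality. Since the spectrum of $\T^{(\delta)}_M$ at $\rho=1$ is simple, the eigenvectors of the $\rho$-family converge to it up to sign, so $\V_M(\rho)\to\Db_\pm\Sb^{\mathrm{VII}}_M\J_M$. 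This is the ``in particular'' claim.

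For the sum branch, I would avoid taking limits of $\A_{M+1}$ directly. By Theorem~\ref{thm:odd}, the rows of $\A_{M+1}(\V_M\oplus1)$ are the eigenvectors of $\Sb_{M+1}$, which is the full generator $\T_N$ restricted to symmetric vectors in the fold coordinates $(\ub,c)$. At $\rho=1$ the symmetric eigenvectors of $\T_N$ are the even-index rows of $\Cb^{\mathrm{II}}_N$ (Clarke's limit, or directly the Neumann modes at both ends), namely $\cos\bigl(2\pi n(k+\tfrac12)/N\bigr)$. I then apply $\B_N$ to obtain their coordinates. For $k<M$ the sum coordinate is $\sqrt2$ times the sample, which gives $\sqrt2\cos\bigl(\pi n(2k+1)/(2M+1)\bigr)$. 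The center coordinate is the sample $\cos(\pi n)$ itself, with no factor $\sqrt2$, and this produces the weight $\gamma_M=1/\sqrt2$ relative to the others. Writing the center index as $k=M$ gives $\cos\bigl(\pi n(2M+1)/(2M+1)\bigr)$, so the same formula covers it. With the DCT-II normalization $\sqrt{2/N}\,\beta_n$, this yields precisely $[\Cb^{\mathrm{VI}}_{M+1}]_{n,k}$. Because the limiting spectrum of $\Sb_{M+1}$ is again simple, continuity applies here as well.

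Finally, I apply the same computation to the odd-index DCT-II rows and the difference coordinates. This reproduces $\Db_\pm\Sb^{\mathrm{VII}}_M\J_M$ and serves as a consistency check that $\T^{(\delta)}_M$ is the antisymmetric block. Combined with the interleaving $\PP_N$ of Theorem~\ref{thm:odd}, it gives the displayed identity $\Cb^{\mathrm{II}}_N=\PP_N(\Cb^{\mathrm{VI}}_{M+1}\oplus\Db_\pm\Sb^{\mathrm{VII}}_M\J_M)\B_N$.

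The main obstacle is the continuity step. It requires ordering consistency: the ascending-$\nu$ and ascending-$\delta$ orders must match the index $n$ at $\rho=1$. It also requires that the limiting spectra are simple, so that no eigenvector swaps occur. Both follow from the explicit monotone eigenvalues $2-2\cos\omega$. The remaining care is only in bookkeeping the row signs.
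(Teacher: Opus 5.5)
Your proposal is correct and is essentially the paper's proof. Like the paper, you pass to $\rho=1$ in the generator domain, read off the free-end/Dirichlet modes of $\T^{(\delta)}_M$ and reverse them to get $\Db_\pm\Sb^{\mathrm{VII}}_M\J_M$, fold the symmetric (even-index) DCT-II modes of $\T_N(1)$ through $\B_N$ so that the unscaled center coordinate produces $\gamma_M$, and conclude by continuity of the simple eigensystems; your extra check that the odd-index DCT-II rows fold to $\Db_\pm\Sb^{\mathrm{VII}}_M\J_M$ simply makes the displayed identity explicit.
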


\begin{proof}
At $\rho=1$ the residual generator $\T^{(\delta)}_M$ becomes the second-difference matrix with a free outer end and the constraint $y_M=0$ at the fold. Its modes are $y_k=\cos\bigl(\theta_n(k+\tfrac12)\bigr)$ with $\theta_n=(2n+1)\pi/(2M+1)$, eigenvalues $2-2\cos\theta_n$ ascending. Reversing coordinates and using $\theta_n(M+\tfrac12)=(n+\tfrac12)\pi$, $\cos\bigl(\theta_n(M-1-k+\tfrac12)\bigr)=(-1)^n\sin\bigl(\theta_n(k+1)\bigr)$: row $n$ of $\Sb^{\mathrm{VII}}_M$ up to the sign $(-1)^n$. For the sum branch, the symmetric modes of $\T_N(1)$ (the DCT-II generator) are $y_k=\cos\bigl(2\pi m(k+\tfrac12)/(2M+1)\bigr)$, $m=0,\dots,M$, ascending. The branch vector of a unit mode is $(\sqrt2y_0,\dots,\sqrt2y_{M-1},y_M)$. Writing out the constants---using $\cos\bigl(\pi m(2M+1)/(2M+1)\bigr)=(-1)^m$ at the center---gives exactly the rows of $\Cb^{\mathrm{VI}}_{M+1}$, including the half-weights $\beta_0,\gamma_M$. Continuity of the simple eigensystems in $\rho$ completes the proof.
\end{proof}

Taking the $\rho\to1$ limit \emph{inside} the factorization~\eqref{eq:odd-thm} yields one more consequence. The parity split of the odd-length DCT-II presents $\Cb^{\mathrm{VI}}_{M+1}$ and $\Sb^{\mathrm{VII}}_M$ as independent diagonal blocks and, by itself, implies no relation between them. In fact one determines the other: the limiting arrowhead factor is an explicit orthogonal conversion between the two transforms, in closed form.

\begin{corollary}[Conversion between the DST-VII and the DCT-VI]
\label{cor:conversion}
For every $M\ge1$, with $\omega_m=\tfrac{2\pi m}{2M+1}$ and $\theta_n=\tfrac{(2n+1)\pi}{2M+1}$,
\begin{equation}
\Cb^{\mathrm{VI}}_{M+1} = \bar{\A}_{M+1}\bigl(\Db_\pm\,\Sb^{\mathrm{VII}}_M\,\J_M\oplus1\bigr),
\label{eq:conversion}
\end{equation}
where $\bar{\A}_{M+1}$ is the orthogonal matrix with entries ($m=0,\dots,M$; $n=0,\dots,M-1$)
\begin{equation}
\begin{aligned}
\bigl[\bar{\A}_{M+1}\bigr]_{m,n} &= (-1)^{m+n}\,\frac{2\beta_m}{2M+1}\cdot\frac{\sin\theta_n}{\cos\omega_m-\cos\theta_n},\\
\bigl[\bar{\A}_{M+1}\bigr]_{m,M} &= (-1)^{m}\,\frac{\sqrt2\,\beta_m}{\sqrt{2M+1}},
\end{aligned}
\label{eq:conversion-entries}
\end{equation}
with $\beta_m$ as in~\eqref{eq:trigodd}. Equivalently, since $\Db_\pm\Sb^{\mathrm{VII}}_M\J_M=\Cb^{\mathrm{VIII}}_M$,
\begin{equation}
\Cb^{\mathrm{VI}}_{M+1}=\bar{\A}_{M+1}\bigl(\Cb^{\mathrm{VIII}}_M\oplus1\bigr).
\label{eq:conversion8}
\end{equation}
\end{corollary}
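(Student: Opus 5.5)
The plan is to take the $\rho\to1$ limit inside Theorem~\ref{thm:odd} and then compute the arrowhead diagonalizer explicitly. By Corollary~\ref{cor:odd-limit}, $\A_{M+1}(\rho)(\V_M(\rho)\oplus1)\to\Cb^{\mathrm{VI}}_{M+1}$ and $\V_M(\rho)\to\Db_\pm\Sb^{\mathrm{VII}}_M\J_M$. Since $\V_M\oplus1$ is orthogonal, we have $\A_{M+1}=\bigl[\A_{M+1}(\V_M\oplus1)\bigr](\V_M\oplus1)^{\mathsf T}$, so $\A_{M+1}(\rho)$ converges to
\[
\bar{\A}_{M+1}\triangleq\Cb^{\mathrm{VI}}_{M+1}\bigl(\Db_\pm\Sb^{\mathrm{VII}}_M\J_M\oplus1\bigr)^{\mathsf T},
\]
which is orthogonal and satisfies \eqref{eq:conversion} identically. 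Alternatively, one can avoid limits altogether: define $\bar{\A}_{M+1}$ by this formula and verify the entries directly. The substance of the corollary is therefore the closed form \eqref{eq:conversion-entries}, which I would derive in two independent ways.

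\emph{Direct computation.} The last column of $\bar{\A}_{M+1}$ is simply the last column of $\Cb^{\mathrm{VI}}_{M+1}$: entry $\tfrac{2}{\sqrt{2M+1}}\beta_m\gamma_M\cos(\pi m)=(-1)^m\sqrt2\beta_m/\sqrt{2M+1}$, as claimed. For $n<M$, entry $(m,n)$ is
\[
\sum_{k=0}^{M-1}\bigl[\Cb^{\mathrm{VI}}\bigr]_{m,k}\,(-1)^n\bigl[\Sb^{\mathrm{VII}}\bigr]_{n,M-1-k}.
\]
By the reversal identity used in the proof of Corollary~\ref{cor:odd-limit}, $(-1)^n\sin(\theta_n(M-k))=\cos(\theta_n(k+\tfrac12))$. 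The entry therefore becomes
\[
\frac{4\beta_m}{2M+1}\sum_{k=0}^{M-1}\cos\bigl(\omega_m(k+\tfrac12)\bigr)\cos\bigl(\theta_n(k+\tfrac12)\bigr),
\]
a finite sum of products of cosines. Using product-to-sum and the Dirichlet-type sum $\sum_{k=0}^{M-1}\cos((k+\tfrac12)x)=\sin(Mx)/(2\sin(x/2))$, evaluated at $x=\theta_n\pm\omega_m$, the endpoint values simplify via $(\theta_n\pm\omega_m)(M+\tfrac12)\in\pi(\tfrac12+\mathbb Z)$. This yields $\sin(M x)=\pm\cos(x/2)$, so each half-sum collapses to $\pm\tfrac12\cot(x/2)$. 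Combining the two cotangents gives
\[
\cot\tfrac{a}{2}+\cot\tfrac{b}{2}=\frac{\sin\frac{a+b}{2}}{\sin\frac a2\sin\frac b2}
\]
with $a,b=\theta_n\pm\omega_m$, which turns into $\sin\theta_n/(\cos\omega_m-\cos\theta_n)$ times the stated constant and sign $(-1)^{m+n}$.

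\emph{Structural cross-check.} This route confirms the Cauchy form independently. At $\rho=1$, $\Hb_{M+1}$ has $\boldsymbol\Delta=\operatorname{diag}(2-2\cos\theta_n)$, and its eigenvalues are $\nu_m=2-2\cos\omega_m$. By \eqref{eq:arrowhead-vectors}, row $m$ is proportional to $b_n/(\delta_n-\nu_m)$, with denominator $2(\cos\omega_m-\cos\theta_n)$, and $b_n=-\sqrt2\,q_n$, where $q_n$ is the fold entry of the DST-VII row, $\propto\sin\theta_n$ up to sign. This matches \eqref{eq:conversion-entries}; the normalization then comes from the last column computed above.

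\textbf{Main obstacle.} The delicate step is the sign and endpoint bookkeeping in the trigonometric sum: tracking $(-1)^n$ from $\Db_\pm$, the half-integer multiples of $\pi$, and the $\beta_0$ case $m=0$. I would verify these against $M=1$, where everything is $2\times2$. Finally, \eqref{eq:conversion8} follows from the identity $\Db_\pm\Sb^{\mathrm{VII}}_M\J_M=\Cb^{\mathrm{VIII}}_M$, which is just the same reversal identity read entrywise: $(-1)^n\sin(\theta_n(M-k))=\cos\bigl(\pi(2n+1)(2k+1)/(2(2M+1))\bigr)$.
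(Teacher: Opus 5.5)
Your proposal is correct and follows the paper's proof essentially step for step. Existence and orthogonality of $\bar{\A}_{M+1}=\Cb^{\mathrm{VI}}_{M+1}(\Db_\pm\Sb^{\mathrm{VII}}_M\J_M\oplus1)^{\mathsf T}$ come from the $\rho\to1$ limit of Theorem~\ref{thm:odd}; the entries then follow from the reversal identity, product-to-sum, the closed sum $\sin(Mx)/(2\sin(x/2))$ and the reduction to cotangents, with the center column read off $\Cb^{\mathrm{VI}}_{M+1}$. Your extra arrowhead cross-check at $\rho=1$ is not in the paper but is also sound, signs included: it uses $\delta_n-\nu_m=2(\cos\omega_m-\cos\theta_n)$ and $q_n=\tfrac{2}{\sqrt{2M+1}}(-1)^n\sin\theta_n$, and fixes the scale of each row from the last column.
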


\begin{proof}
Relation~\eqref{eq:conversion} with \emph{some} orthogonal $\bar{\A}_{M+1}$ is the $\rho\to1$ limit of Theorem~\ref{thm:odd}. By Corollary~\ref{cor:odd-limit}, the sum branch $\A_{M+1}(\V_M\oplus1)$ tends to $\Cb^{\mathrm{VI}}_{M+1}$ while $\V_M\to\Db_\pm\Sb^{\mathrm{VII}}_M\J_M$; hence $\bar{\A}_{M+1}=\lim_{\rho\to1}\A_{M+1}$, and it remains to compute $\bar{\A}_{M+1}=\Cb^{\mathrm{VI}}_{M+1}(\Db_\pm\Sb^{\mathrm{VII}}_M\J_M\oplus1)^{\mathsf T}$ entrywise. Using $\theta_n(M+\tfrac12)=(n+\tfrac12)\pi$, the reversed rows simplify to $[\Db_\pm\Sb^{\mathrm{VII}}_M\J_M]_{n,k}=\tfrac{2}{\sqrt{2M+1}}\cos\bigl(\theta_n(k+\tfrac12)\bigr)$, which is~\eqref{eq:conversion8}. The interior entries are then
\[
\bigl[\bar{\A}_{M+1}\bigr]_{m,n}
=\frac{4\beta_m}{2M+1}\sum_{k=0}^{M-1}\cos\bigl(\omega_m(k+\tfrac12)\bigr)\cos\bigl(\theta_n(k+\tfrac12)\bigr).
\]
Expand the product into $\tfrac12\cos\bigl((\omega_m\mp\theta_n)(k+\tfrac12)\bigr)$ and apply the closed sum $\sum_{k=0}^{M-1}\cos\bigl(\alpha(k+\tfrac12)\bigr)=\tfrac{\sin(\alpha M)}{2\sin(\alpha/2)}$. Reducing $\sin(\alpha M)$ with $M=\tfrac{(2M+1)-1}{2}$ (the numerators of $\omega_m\mp\theta_n$ are odd multiples of $\tfrac{\pi}{2M+1}$) turns both terms into cotangents, whose difference collapses by product-to-sum to the stated Cauchy form. The center column is $[\Cb^{\mathrm{VI}}_{M+1}]_{m,M}=(-1)^m\sqrt2\beta_m/\sqrt{2M+1}$ directly from~\eqref{eq:trigodd}. Orthogonality of $\bar{\A}_{M+1}$ is inherited from~\eqref{eq:conversion}.
\end{proof}

The limit $\rho\to1$ can likewise be taken inside~\eqref{eq:resfam} and~\eqref{eq:resfam-odd}: the residual KLTs tend to the DST-VII, and the recursions of Corollary~\ref{cor:resfam} become self-similar factorizations of the DST-VII itself, with fixed Cauchy-structured stages.

\section{Exact short-length KLT factorizations}
\label{sec:short}

Unrolling Theorems~\ref{thm:even} and~\ref{thm:odd} at short lengths yields complete, exact KLT algorithms. We list them here for $N=2,\dots,8$, in the style of the classical short-length DCT modules (cf.~\cite{Heideman}): as chains of operations per stage, without proofs. Throughout, $G(\varphi)$ denotes the plane rotation
\[
G(\varphi):\ (a,b)\ \mapsto\ (a\cos\varphi+b\sin\varphi,\ -a\sin\varphi+b\cos\varphi).
\]
All constants are precomputed functions of $\rho\in(0,1)$. Outputs $y_0,\dots,y_{N-1}$ appear in KLT order (decreasing variance), with rows fixed up to sign (Remark~\ref{rem:signs}). Dense final stages are written as orthogonal matrices applied to intermediate vectors. Their rows are given in the Cauchy forms~\eqref{eq:secular-vectors} and~\eqref{eq:arrowhead-vectors}, normalized to unit length and ordered by ascending root. Every constant below is an elementary or radical function of $\rho$: the defining polynomials have degree $\le4$ for all $N\le8$, so their roots are expressible in radicals. The first length requiring a quintic equation---hence, in general, numerical secular roots---is $N=9$. As $\rho\to1$ the modules reduce to the corresponding fast DCT-II modules (for $N=4$, to Chen's~\cite{ChenSmithFralick}); as $\rho\to0$, to DST-I modules.

\medskip
\noindent\textbf{N = 2.}\quad
$y_0=(x_0+x_1)/\sqrt2$, \ $y_1=(x_0-x_1)/\sqrt2$ \ (independent of $\rho$).

\medskip
\noindent\textbf{N = 3.}
\begin{align*}
&\text{Stage 1:}\ \ u=\tfrac{x_0+x_2}{\sqrt2}, \quad v=\tfrac{x_0-x_2}{\sqrt2};\\
&\text{Stage 2:}\ \ (y_0,y_2)=G(\varphi)\,(u,\,x_1), \quad y_1=v;\\
&\qquad\qquad \tan2\varphi=2\sqrt2/\rho.
\end{align*}
The single $\rho$-dependent angle $\varphi=\tfrac12\arctan(2\sqrt2/\rho)$ decreases from $45^\circ$ ($\rho\to0$; 3-point DST-I) to $\tfrac12\arctan2\sqrt2\approx35.26^\circ$ ($\rho\to1$; 3-point DCT-II).

\medskip
\noindent\textbf{N = 4.}
\begin{align*}
&\text{Stage 1:}\ \ u_0=\tfrac{x_0+x_3}{\sqrt2},\ \ u_1=\tfrac{x_1+x_2}{\sqrt2},\\
&\qquad\qquad v_0=\tfrac{x_0-x_3}{\sqrt2},\ \ v_1=\tfrac{x_1-x_2}{\sqrt2};\\
&\text{Stage 2:}\ \ s_0=\tfrac{u_0+u_1}{\sqrt2},\ \ s_1=\tfrac{u_0-u_1}{\sqrt2},\\
&\qquad\qquad d_0=\tfrac{v_0+v_1}{\sqrt2},\ \ d_1=\tfrac{v_0-v_1}{\sqrt2};\\
&\text{Stage 3:}\ \ (y_0,y_2)=G(\varphi_s)\,(s_0,s_1),\\
&\qquad\qquad (y_1,y_3)=G(\varphi_a)\,(d_0,d_1);\\
&\qquad\qquad \tan2\varphi_s=-\tfrac{1-\rho}{2}, \quad \tan2\varphi_a=\tfrac{1+\rho}{2}.
\end{align*}
As $\rho\to1$: $\varphi_s\to0$, $\varphi_a\to\pi/8$, and the module reduces to Chen's fast 4-point DCT-II~\cite{ChenSmithFralick}.

\medskip
\noindent\textbf{N = 5.}
\begin{align*}
&\text{Stage 1:}\ \ u_k=\tfrac{x_k+x_{4-k}}{\sqrt2},\ \ v_k=\tfrac{x_k-x_{4-k}}{\sqrt2}\ \ (k=0,1);\\
&\qquad\qquad c=x_2;\\
&\text{Stage 2:}\ \ a_0=\tfrac{u_0+u_1}{\sqrt2},\ \ a_1=\tfrac{u_0-u_1}{\sqrt2},\\
&\qquad\qquad b_0=\tfrac{v_0+v_1}{\sqrt2},\ \ b_1=\tfrac{v_0-v_1}{\sqrt2};\\
&\text{Stage 3:}\ \ (r_0,r_1)=G(\psi)\,(a_0,a_1),\\
&\qquad\qquad (y_1,y_3)=G(\psi)\,(b_0,b_1); \quad \tan2\psi=\rho/2;\\
&\text{Stage 4:}\ \ (y_0,y_2,y_4)^{\mathsf T}=\mathbf{A}_3\,(r_0,r_1,c)^{\mathsf T},
\end{align*}
where the rows of the orthogonal matrix $\mathbf{A}_3$ are, for $i=1,2,3$, the normalized vectors
$\bigl(\tfrac{\sqrt2\rho\,q_1}{\delta_1-\nu_i},\ \tfrac{\sqrt2\rho\,q_2}{\delta_2-\nu_i},\ 1\bigr)$,
with
\[
q_1=\tfrac{\cos\psi-\sin\psi}{\sqrt2},\qquad q_2=-\tfrac{\cos\psi+\sin\psi}{\sqrt2},
\]
\[
\delta_{1,2}=1+\tfrac{\rho^2}{2}\mp\sqrt{\rho^2+\tfrac{\rho^4}{4}},
\]
and $\nu_1<\nu_2<\nu_3$ the roots of the cubic equation
\[
(1-\nu)(1+\rho^2-\nu)^2=\rho^2\bigl[2(1-\nu)+(1+\rho^2-\nu)\bigr].
\]
Note that a single rotation angle $\psi$ serves both branches: this is the residual KLT $\V_2$ of Theorem~\ref{thm:odd}, applied once to the sums and once to the differences.

\medskip
\noindent\textbf{N = 6.}
\begin{align*}
&\text{Stage 1:}\ \ u_k=\tfrac{x_k+x_{5-k}}{\sqrt2},\ \ v_k=\tfrac{x_k-x_{5-k}}{\sqrt2}\ \ (k=0,1,2);\\
&\text{Stage 2 (inner 3-point KLT, both branches):}\\
&\qquad a=\tfrac{u_0+u_2}{\sqrt2},\ \ g_1=\tfrac{u_0-u_2}{\sqrt2},\\
&\qquad (g_0,g_2)=G(\varphi_3)\,(a,\,u_1);\\
&\qquad b=\tfrac{v_0+v_2}{\sqrt2},\ \ h_1=\tfrac{v_0-v_2}{\sqrt2},\\
&\qquad (h_0,h_2)=G(\varphi_3)\,(b,\,v_1); \quad \tan2\varphi_3=2\sqrt2/\rho;\\
&\text{Stage 3:}\ \ (y_0,y_2,y_4)^{\mathsf T}=\Qh^{(s)}(g_0,g_1,g_2)^{\mathsf T},\\
&\qquad\qquad (y_1,y_3,y_5)^{\mathsf T}=\Qh^{(a)}(h_0,h_1,h_2)^{\mathsf T},
\end{align*}
where the rows of $\Qh^{(s)}$ (resp.\ $\Qh^{(a)}$) are the normalized vectors
$\bigl(\tfrac{p_1}{\mu_1-\nu},\tfrac{p_2}{\mu_2-\nu},\tfrac{p_3}{\mu_3-\nu}\bigr)$
over the three roots $\nu$, in ascending order, of the cubic equation
\[
(1-\nu)\bigl[(1+\rho^2-\nu)(\tau-\nu)-\rho^2\bigr]=\rho^2(\tau-\nu),
\]
with $\tau=\tau_s\triangleq1-\rho+\rho^2$ (resp.\ $\tau=\tau_a\triangleq1+\rho+\rho^2$), and
\[
\mu_{1,3}=1+\tfrac{\rho^2}{2}\mp\sqrt{2\rho^2+\tfrac{\rho^4}{4}},\qquad \mu_2=1,
\]
\[
\pb=\Bigl(\tfrac{\cos\varphi_3}{\sqrt2},\ -\tfrac{1}{\sqrt2},\ -\tfrac{\sin\varphi_3}{\sqrt2}\Bigr).
\]

\medskip
\noindent\textbf{N = 7.}
\begin{align*}
&\text{Stage 1:}\ \ u_k=\tfrac{x_k+x_{6-k}}{\sqrt2},\ \ v_k=\tfrac{x_k-x_{6-k}}{\sqrt2}\ \ (k=0,1,2);\\
&\qquad\qquad c=x_3;\\
&\text{Stage 2 (inner 3-point KLT, both branches;}\\
&\qquad\text{$\varphi_3$, $\mu_m$, $\pb$ as for $N=6$):}\\
&\qquad a=\tfrac{u_0+u_2}{\sqrt2},\ \ g_1=\tfrac{u_0-u_2}{\sqrt2},\\
&\qquad (g_0,g_2)=G(\varphi_3)\,(a,\,u_1);\\
&\qquad b=\tfrac{v_0+v_2}{\sqrt2},\ \ h_1=\tfrac{v_0-v_2}{\sqrt2},\\
&\qquad (h_0,h_2)=G(\varphi_3)\,(b,\,v_1);\\
&\text{Stage 3 (residual correction, both branches):}\\
&\qquad (r_0,r_1,r_2)^{\mathsf T}=\Qh^{(\delta)}(g_0,g_1,g_2)^{\mathsf T},\\
&\qquad (y_1,y_3,y_5)^{\mathsf T}=\Qh^{(\delta)}(h_0,h_1,h_2)^{\mathsf T},\\
&\text{Stage 4:}\ \ (y_0,y_2,y_4,y_6)^{\mathsf T}=\mathbf{A}_4\,(r_0,r_1,r_2,c)^{\mathsf T},
\end{align*}
where the rows of $\Qh^{(\delta)}$ are the normalized vectors
$\bigl(\tfrac{p_1}{\mu_1-\delta},\tfrac{p_2}{\mu_2-\delta},\tfrac{p_3}{\mu_3-\delta}\bigr)$
over the three roots $\delta_1<\delta_2<\delta_3$ of the residual cubic equation
\[
\chi(\delta)\triangleq(1-\delta)\bigl[(1+\rho^2-\delta)^2-\rho^2\bigr]-\rho^2(1+\rho^2-\delta)=0,
\]
the coupling vector is $\qb=\Qh^{(\delta)}\pb$, and the rows of the orthogonal matrix $\mathbf{A}_4$ are, for $i=1,\dots,4$, the normalized vectors
$\bigl(\tfrac{\sqrt2\rho\,q_1}{\delta_1-\nu_i},\ \tfrac{\sqrt2\rho\,q_2}{\delta_2-\nu_i},\ \tfrac{\sqrt2\rho\,q_3}{\delta_3-\nu_i},\ 1\bigr)$,
with $\nu_1<\dots<\nu_4$ the roots of the quartic equation
\[
\chi(\nu)\,(1+\rho^2-\nu)=2\rho^2\bigl[(1-\nu)(1+\rho^2-\nu)-\rho^2\bigr].
\]
As in $N=5$, one operator---here the residual KLT $\V_3=\Qh^{(\delta)}\W_3$ of Stage 2--3---serves both branches, and only the arrowhead stage distinguishes them.

\medskip
\noindent\textbf{N = 8.}
\begin{align*}
&\text{Stage 1:}\ \ u_k=\tfrac{x_k+x_{7-k}}{\sqrt2},\ \ v_k=\tfrac{x_k-x_{7-k}}{\sqrt2}\ \ (k=0,\dots,3);\\
&\text{Stage 2 (inner 4-point KLT, both branches;}\\
&\qquad\text{$\varphi_s$, $\varphi_a$ as for $N=4$):}\\
&\qquad a_0=\tfrac{u_0+u_3}{\sqrt2},\ \ a_1=\tfrac{u_1+u_2}{\sqrt2},\\
&\qquad b_0=\tfrac{u_0-u_3}{\sqrt2},\ \ b_1=\tfrac{u_1-u_2}{\sqrt2};\\
&\qquad (g_0,g_2)=G(\varphi_s)\bigl(\tfrac{a_0+a_1}{\sqrt2},\tfrac{a_0-a_1}{\sqrt2}\bigr),\\
&\qquad (g_1,g_3)=G(\varphi_a)\bigl(\tfrac{b_0+b_1}{\sqrt2},\tfrac{b_0-b_1}{\sqrt2}\bigr);\\
&\qquad \text{likewise } (v_0,\dots,v_3)\mapsto(h_0,\dots,h_3);\\
&\text{Stage 3:}\ \ (y_0,y_2,y_4,y_6)^{\mathsf T}=\Qh^{(s)}(g_0,\dots,g_3)^{\mathsf T},\\
&\qquad\qquad (y_1,y_3,y_5,y_7)^{\mathsf T}=\Qh^{(a)}(h_0,\dots,h_3)^{\mathsf T},
\end{align*}
where the rows of $\Qh^{(s)}$ (resp.\ $\Qh^{(a)}$) are the normalized vectors
$\bigl(\tfrac{p_1}{\mu_1-\nu},\dots,\tfrac{p_4}{\mu_4-\nu}\bigr)$
over the four ascending roots $\nu$ of the quartic equation
\[
\chi_s(\nu)\,\chi_a(\nu)=\rho(1-\rho)\,\chi(\nu)
\]
(resp.\ $\chi_s(\nu)\,\chi_a(\nu)=-\rho(1+\rho)\,\chi(\nu)$), with the parity quadratic polynomials
\[
\chi_s(\nu)\triangleq(1-\nu)(\tau_s-\nu)-\rho^2,\quad
\chi_a(\nu)\triangleq(1-\nu)(\tau_a-\nu)-\rho^2
\]
($\tau_s,\tau_a$ as for $N=6$), $\chi$ the residual cubic polynomial of the $N=7$ module, and
\[
\mu_{1,3}=1-\tfrac{\rho(1-\rho)}{2}\mp\tfrac{\rho}{2}\sqrt{(1-\rho)^2+4},
\]
\[
\mu_{2,4}=1+\tfrac{\rho(1+\rho)}{2}\mp\tfrac{\rho}{2}\sqrt{(1+\rho)^2+4},
\]
\[
p_1=\tfrac{\cos\varphi_s+\sin\varphi_s}{2},\qquad
p_2=-\tfrac{\cos\varphi_a+\sin\varphi_a}{2},
\]
\[
p_3=\tfrac{\cos\varphi_s-\sin\varphi_s}{2},\qquad
p_4=\tfrac{\sin\varphi_a-\cos\varphi_a}{2}.
\]

The two branch quartic polynomials differ only in the coefficient multiplying $\chi$---namely $-\rho(1-\rho)$ and $+\rho(1+\rho)$, the two rank-one strengths of the even fold.

\medskip
All seven modules have been verified against direct eigendecomposition to machine precision across $\rho\in(0,1)$; a C reference implementation of the modules and of the verification procedure is available in the accompanying open-source repository~\cite{AR1KLTrepo}. The cubic and quartic equations above are the characteristic equations of the corresponding parity blocks of $\T_N$, written in the variables of Sections~\ref{sec:even}--\ref{sec:odd}. Their roots interlace as stated there, which fixes the output ordering without further computation.

\section{Fast application and complexity}
\label{sec:complexity}

The only dense factors in~\eqref{eq:even-thm} and~\eqref{eq:odd-thm} are the correction stages $\Qh$ and the arrowhead diagonalizer $\A_{M+1}$, and none of them is unstructured. By~\eqref{eq:secular-vectors}, the rows of each $\Qh$ are, up to row-wise normalization, rows of the Cauchy matrix $[1/(\mu_j-\nu_i)]$. Absorbing the entries of $\pb$ into a column scaling, $\Qh=\Db_1\Cb\,\Db_2$ with $\Cb$ Cauchy and $\Db_1,\Db_2$ diagonal. The same holds for $\A_{M+1}$ by~\eqref{eq:arrowhead-vectors}, with one appended column. Multiplication of a vector by a Cauchy matrix is the classical Trummer problem: it can be performed exactly in $O(M\log^2M)$ operations~\cite{Gerasoulis}, and to any fixed accuracy in $O(M)$ operations by the fast multipole method~\cite{GreengardRokhlin}. This is precisely the technology of the fast divide-and-conquer symmetric eigensolvers~\cite{Cuppen,GuEisenstat}, whose diagonal-plus-rank-one update structure the present factorizations share; the interlacing of poles and roots (Section~\ref{sec:secular}) provides the separation on which these methods rely.

The resulting complexity is easiest to state in the even case. For $N=2^t$, Theorem~\ref{thm:even} gives the cost recursion
\[
C(N)=2\,C(N/2)+2\,A(N/2)+O(N),
\]
where $A(M)$ is the cost of one correction stage and the $O(N)$ term covers the butterfly. Applied densely, $A(M)=\Theta(M^2)$ and $C(N)=\Theta(N^2)$: the recursion then offers no asymptotic gain over direct application of $\W_N$, although the sum-branch correction, of strength $\sigma_s=\rho(1-\rho)$, weakens at high correlation and can be truncated in practice. With exact Cauchy multiplication, $A(M)=O(M\log^2M)$ and $C(N)=O(N\log^3N)$. With fast-multipole application at fixed accuracy, $A(M)=O(M)$ and
\[
C(N)=O(N\log N),
\]
the same order as the fast DCT, to which the recursion degenerates at $\rho\to1$.

The odd-order recursion has the same profile. By~\eqref{eq:V}, the residual KLT costs one Cauchy-structured correction beyond $\W_M$, and the arrowhead stage is a Cauchy matrix with one extra column. Mixed-parity recursions therefore also run in $O(N\log N)$ with fast-multipole-accelerated stages. The repository~\cite{AR1KLTrepo} includes a reference implementation of both recursions (dense stages, one shared half-size plan per level), verified against direct eigendecomposition for orders up to $1000$.

\section{Discussion and conclusions}
\label{sec:conclusions}

The question this paper revisited is the one that motivated the invention of the DCT: how to compute the optimal transform of an AR(1) source. The answer given in 1974 was an approximation with a fast algorithm. The answer given here is that the exact transform has a fast structure of its own, at every order. For even $N$, the KLT folds into two copies of itself at half order, corrected by rotations that rank-one boundary perturbations of strengths $\rho(1\mp\rho)$ generate. For odd $N$, it folds into two copies of the residual KLT---the KLT of the block conditioned on a boundary sample---with the center sample entering through an arrowhead border. The residual KLT recurses through half order as well (Corollary~\ref{cor:resfam}). Both factorizations follow from the covariance by the Sherman--Morrison identity and the classical secular eigen-updates. At $\rho\to1$ they degenerate to the classical DCT-II/DCT-IV split underlying Chen's fast DCT, and to the DCT-VI/DST-VII split of the odd-length DCT-II. And with fast-multipole application of the Cauchy-structured correction stages, both recursions run in $O(N\log N)$ operations (Section~\ref{sec:complexity}): the exact KLT admits the same complexity order as the FFT and the fast trigonometric transforms.

The organizing principle of both factorizations is the phase equation's content---boundary reflection phases: the entire deviation of the AR(1) KLT from a translation-invariant fast transform lives in boundary terms, and boundary terms are rank-one perturbations or, at odd order, rank-one borders.

The construction also clarifies the classical landscape. The asymmetry of Chen's split---the DCT-II pairs with the DCT-IV rather than with itself---persists at the exact-KLT level at every $\rho$, quantified by the strengths $\rho(1-\rho)$ and $\rho(1+\rho)$. The odd DCT-II splits into two \emph{different} transforms because, conditioned on the center sample, a single residual KLT drives both branches. The DST-VII---adopted for intra-prediction residuals in modern video coding~\cite{HanSaxenaRose,ReznikICASSP}---is exhibited as the $\rho\to1$ face of an object that exists exactly at every $\rho$; the same limit yields the conversion identity of Corollary~\ref{cor:conversion}.

Several directions follow. On the practical side: implementations that exploit the structure fully, including fast construction and fast-multipole application of the correction stages, and a fresh look at compression applications where $\rho$ is known or can be estimated---image and video coding foremost among them. On the theoretical side: extending the construction to richer sources, beginning with AR($p$), where one expects rank-$p$ boundary terms and wider borders.

\end{document}